\setlist[enumerate]{noitemsep}
\setlist[itemize]{noitemsep}
\title{Modular Subset Sum, \hspace{1in} \  \\
Dynamic Strings, \\ 
\hspace{1in} \  and Zero-Sum Sets}
\author{
Jean Cardinal\thanks{Université libre de Bruxelles. Supported by the Fonds de la Recherche Scientifique-FNRS under CDR Grant J.0146.18.}
\and 
John Iacono\thanks{Université libre de Bruxelles and New York University. Supported by NSF grants CCF-1533564 and by the Fonds de la Recherche Scientifique-FNRS under Grant n° MISU F 6001 1.}
}%
\date{}
\newcommand{\Mod}[1]{\ (\mathrm{mod}\ #1)}
\newtheorem{corr}{Corollary}[section]
\begin{document}
\setcounter{page}{1}

\fancyfoot[c]{\thepage}

\maketitle

\begin{abstract}
The modular subset sum problem consists of deciding, given a modulus $m$, a multiset $S$ of $n$ integers in $0..m-1$, and a target integer $t$, whether there exists a subset of $S$ with elements summing to $t \Mod{m}$, and to report such a set if it exists. 
We give a simple randomized with expected $O(m \log m)$ running time algorithm for the modular subset sum problem. 
This builds on and improves on a previous $O(m \log^7 m)$ w.h.p.~algorithm from Axiotis, Backurs, Jin, Tzamos, and Wu (SODA~19). 
Our method utilizes the ADT of the dynamic strings structure of Gawrychowski et~al.~(SODA~18).
However, as this structure is rather complicated  we present a much simpler alternative which we call the Data Dependent Tree.
As an application, we consider the computational version of a fundamental theorem in zero-sum Ramsey theory. 
The Erdős-Ginzburg-Ziv Theorem states that a multiset  of $2n - 1$ integers always contains a subset of cardinality exactly $n$ whose values sum to a multiple of $n$. 
We give an algorithm for finding such a subset in time $O(n \log n)$ w.h.p.~which improves on an $O(n^2)$ algorithm due to Del Lungo, Marini, and Mori (Disc.~Math.~09).

\end{abstract}

\section{Introduction}

In SODA 2019 \cite{DBLP:conf/soda/AxiotisBJTW19}, Axiotis, Backurs, Jin, Tzamos, and Wu gave an algorithm for modular subset sum with runtime $O(m \log^7 m)$ and that returns the correct answer with high probability.
This improved upon an earlier $\Tilde{O}(m^{5/4})$ algorithm of Koiliaris and Xu \cite{DBLP:journals/talg/KoiliarisX19} which first appeared in SODA 2017.

In~\S\ref{s:mss} we improve upon this with a very simple algorithm running in $O(m \log m)$ time in expectation. Our method is a straightforward implementation of the naïve dynamic programming approach, sped up using a recent data structure of Gawrychowski, Karczmarz, Kociumaka, Lacki, and Sankowski~\cite{DBLP:conf/soda/GawrychowskiKKL18}
that supports a number of operations on a persistent collection strings, notably, split and concatenate, as well as finding the longest common prefix (LCP) of two strings, all in at most logarithmic time with updates being with high probability. We also inform the reader of the independent work of Axiotis et al., \cite{DBLP:journals/corr/abs-2008-10577}, contains much of the same results and ideas for modular subset sum as here and appeared along with an earlier version of this paper in the proceedings of the 2021 SIAM Symposium on Simplicity in Algorithms (SOSA21). 

As the dynamic string data structure of~\cite{DBLP:conf/soda/GawrychowskiKKL18} is quite complex, we provide in~\S\ref{s:ddt} a new and far simpler alternative, which we call the \emph{Data Dependent Tree (DDT)} structure. Our general approach is to create a tree with the string stored in the leaves, and where the shape of the tree is a function of the data in the leaves and a random seed; in common with other simple string algorithms we use a hash function to compute fingerprints of strings, a method pioneered in by Karp and Rabin \cite{DBLP:journals/ibmrd/KarpR87}. The result is a structure with almost identical runtimes as \cite{DBLP:conf/soda/GawrychowskiKKL18} but which is only slightly more complex than a skip list and is easy to visualize (see Figure~\ref{fig}) and reason about. We say \emph{almost} identical as~\cite{DBLP:conf/soda/GawrychowskiKKL18} supports LCP queries in constant time whereas we do so in logarithmic time; this makes no overall difference in applications such as ours where the number of LCP queries do not asymptotically dominate the number of update operations, which take logarithmic time in both structures. Additionally, our runtimes hold in expectation in contrast to the stronger results of~\cite{DBLP:conf/soda/GawrychowskiKKL18} which hold with high probability\footnote{Earlier versions of this paper claim that our runtimes for the Data Dependent Tree hold with high probability. We retract such claims, as Lemma~\ref{l:fringe} only holds in expectation. We thank Pawel Gawrychowski and Tomasz Kociumaka for bringing this issue to our attention.}. 
None of the structures that for this problem that predate \cite{DBLP:conf/soda/GawrychowskiKKL18} 
(Sundar and Tarjan
\cite{DBLP:journals/siamcomp/SundarT94},
Melhorn, Sundar and Uhrig
\cite{DBLP:journals/algorithmica/MehlhornSU97}
and Alstrup, Brodal and Rauhe
\cite{DBLP:conf/soda/AlstrupBR00})
match the DDT's logarithmic time for all operations.

As an application, we consider the computational version of a fundamental theorem in zero-sum Ramsey theory   (see~\cite{CARO199693,Bialostocki1993,GAO2006337} for surveys). 
The Erdős-Ginzburg-Ziv Theorem \cite{GEZ} states that a multiset  of $2n - 1$ integers always contains a subset of cardinality exactly $n$ whose values sum to a multiple of $n$. 
We give an algorithm in \S\ref{s:zero} for finding such a subset in time $O(n \log n)$ w.h.p.~which improves on an $O(n^2)$ algorithm due to Del Lungo, Marini, and Mori \cite{DBLP:journals/dm/LungoMM09}.


\section{Modular subset sum via dynamic strings}
\label{s:mss}

The input to the modular subset sum problem is a positive integer modulus $m$, a multiset $S=s_1, s_2,\ldots , s_n$ of $n$ elements of $\mathbb{Z}_m$, and a target value $t$ in $\mathbb{Z}_m$. The multiset thus has at most $m$ distinct items and multiplicities are represented in compact form so that $S$ takes space $O(m)$.
The problem is to decide whether there exists a subset of $S$ whose sum of elements is congruent to $t \Mod{m}$. Our solution, in common with~\cite{DBLP:conf/soda/AxiotisBJTW19}, solves the problem for all values of $t$ in $\mathbb{Z}_m$ simultaneously.

\paragraph{Solution overview.}
Our solution is based on the classic dynamic programming approach, which we now describe.
We use the notation $S+x \coloneqq \{y + x \Mod{m} | y \in S\}$
and $[n]= \{1,2,\ldots,n\}$.
Let $S_i \subseteq \mathbb{Z}_m$ be the set of residues of all the sums of subsets of the first $i$ numbers of $S$: 
$$ S_i \coloneqq \left\{  \sum_{j \in X} s_j \Mod{m} \Bigg| X \subseteq [i] \right\}  $$
Given this definition, the problem is simply to determine if $t \in S_n$
We can construct $S_i$ recursively as follows:
$$ S_i =
\begin{cases}
\{0\} & \text{if $i=0$}\\
S_{i-1} \cup (S_{i-1} + s_i) & \text{otherwise.}
\end{cases} $$
If we wish to obtain the actual subset that adds to a given target $j$, call it $T_j$, as is typical for dynamic programming, another table is needed to record the choices made. Here it is sufficient to record for each target $j$ the index of the $S_i$ where it first was realized, which we call $a_j$:
$$a_j \coloneqq 
\begin{cases}
\displaystyle\min_{j\in S_i} i & \text{if }j \in S_n \\
\infty & \text{otherwise.}
\end{cases}$$ 
Given the $a_j$s, $T_j$ may be easily computed with at most $m-1$ recursions:
\begin{equation} \label{e:sets}
T_j =
\begin{cases}
\emptyset & \text{if $j=0$}
\\
\{s_{a_j}\} \cup T_{j-s_{a_j} \Mod{m}}  & 
\text{\parbox{0.8in}{if $j>0$ \\ \hspace*{6pt} and $a_j \not = \infty$}} 
\\
\textsc{No Subset} &
\text{otherwise.}
\end{cases}
\end{equation}

\paragraph{A string encoding.}
We encode the set $S_i$ in a binary string $\sigma_i \in \{ 0,1\}^m$, such that $j \in S_i$ if and only if the $j$th bit of $\sigma_i$, $\sigma_{i}[j]$, is $1$. The recurrence restated for strings is:
$$ \sigma_i =
\begin{cases}
1\overbrace{00\cdots0}^{m-1\text{ times}}  & \text{if $i=0$}\\
\sigma_{i-1} \vee (\sigma_{i-1} \gg s_i) & \text{otherwise.}
\end{cases} $$
where $\vee$ is the bitwise \textsc{or} operator and $\gg$ is a circular right shift.
In order to perform these two operations, we observe that they can be implemented directly using a recent data structure for dynamic strings:

\begin{theorem} \label{t:dynstrings}
\cite{DBLP:conf/soda/GawrychowskiKKL18}
There exists a data structure for maintaining a collection of strings and supporting the following operations, which we call the dynamic strings ADT:
\begin{itemize}
    \item $D \gets $ \textsc{New}$(x)$: Creates a new string containing a single character, $x$.
    \item $b\gets D$.\textsc{Equal}$(D')$: Returns whether $D$ and $D'$ are equal.    
    \item $\ell \gets D$.\textsc{Lcp}$(D')$: Returns the length of the longest common prefix of $D$ and $D'$.
    \item $D$.\textsc{Set}$(i,x)$: Sets the $i$th character of the string to $x$.
    \item $x \gets D$.\textsc{Get}$(i)$: Returns the $i$th character of the string.
    \item $D' \gets $D.\textsc{Split}$(i)$: The string beyond the $i$th character is removed from this structure and is placed in a new one, which is returned.
    \item $D$.\textsc{Concatenate}$(D')$: Concatenates the string represented by $D'$ to the end of $D'$, $D'$ becomes invalid.
\end{itemize}
The first three operations are constant time, the rest take time $O(\log n)$ where $n$ is the total size of strings stored, and the runtimes of update operations are in expectation.
\end{theorem}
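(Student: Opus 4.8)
The plan is to prove Theorem~\ref{t:dynstrings} by exhibiting a canonical tree representation of strings: each string is stored as a balanced tree whose \emph{shape} is a deterministic function of the string's content together with one fixed random seed, so that two equal strings are represented by the \emph{same} tree and \textsc{Equal} reduces to comparing a single pair of node identifiers in $O(1)$ time. Concretely, I would build the tree by \emph{locally consistent parsing} (the recompression/signature technique): view the string as a sequence of symbols and repeatedly (i)~break the current sequence into consecutive blocks of two or three symbols and (ii)~replace each block by a fresh symbol, iterating until a single symbol remains. Each node stores the length and a Karp--Rabin fingerprint of the substring it spells out, and canonical identifiers are assigned to blocks via a global hash table keyed on the tuple of children, so that identical blocks -- wherever they occur, and across all strings in the collection -- receive the same identifier. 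Since each round shrinks the sequence by a constant factor, the tree has depth $O(\log n)$, and \textsc{New}$(x)$ is a single node.

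For this to work the block boundaries must be chosen \emph{locally}. I would assign every symbol an independent random priority and place a boundary at each local minimum of the priorities (with a fixed tie-breaking rule), so that the block containing a given position is determined by an $O(\log n)$-radius window of the sequence, with high probability. Local consistency is exactly what buys both canonicity and cheap updates: equal substrings are parsed identically except possibly within $O(\log n)$ symbols of their endpoints, and a single edit perturbs the parse only inside an $O(\log n)$-radius window at each of the $O(\log n)$ levels.

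The operations then fall out. \textsc{Get}$(i)$ descends the tree using the stored subtree sizes in $O(\log n)$ time, and \textsc{Set}$(i,x)$ splits out position $i$, replaces it, and re-parses the seam. \textsc{Split}$(i)$ follows the search path to position $i$, cuts the straddling block at every level, and rebuilds the two resulting seams level by level; only $O(1)$ blocks per level need re-parsing, so this costs $O(\log n)$. \textsc{Concatenate} glues the two bottom sequences and re-parses the $O(\log n)$-wide region around the seam at each level, again $O(\log n)$ in total, with the children-to-identifier hash table recognizing any freshly built block that already exists and thus restoring canonicity. \textsc{Lcp}$(D,D')$ descends the two trees in parallel, at each level scanning left to right and comparing canonical identifiers (or fingerprints) until the first mismatch, then recursing into the mismatching pair; this gives $O(\log n)$ levels with $O(1)$ work per level. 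I note that the $O(1)$-time bound for \textsc{Lcp} claimed in the statement additionally requires the precomputed structure of~\cite{DBLP:conf/soda/GawrychowskiKKL18}, but $O(\log n)$ is all our applications need. Persistence across the $\sigma_0,\dots,\sigma_n$ is obtained by path copying, adding only $O(\log n)$ per update.

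The main obstacle is the probabilistic analysis of local consistency: one must show that over the random priorities (a)~each parsing round shrinks the sequence by a constant factor, so the depth is $O(\log n)$, and (b)~the block structure near any position depends only on an $O(\log n)$-radius window, both with failure probability $n^{-\Omega(1)}$, and then union-bound over the $\mathrm{poly}(n)$ positions and operations so that the entire execution is correct and fast with high probability. One must likewise argue that Karp--Rabin fingerprint collisions and spurious collisions in the children-to-identifier hash table occur with negligible probability over the whole execution, by taking the fingerprint field polynomially large and union-bounding. These are precisely the steps where the ``with high probability'' qualifier in the theorem is spent.
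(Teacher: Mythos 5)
There is a genuine gap at the heart of your construction, and it is worth being precise about it because it is exactly the issue the paper's own Section~\ref{s:ddt} (and the cited work) is organized around. You need two properties simultaneously: (a) the tree shape must be a function of the string's \emph{content} only (canonicity), since \textsc{Equal} and \textsc{Lcp} compare identifiers across different strings and different versions; and (b) each parsing round must shrink the sequence by a constant factor. Your boundary rule --- ``assign every symbol an independent random priority and place a boundary at each local minimum'' --- cannot deliver both. If the priorities are drawn independently per \emph{position}, two occurrences of the same substring (in the same string, in two strings of the collection, or in two persistent versions) are parsed differently, so root identifiers of equal strings need not coincide and your \textsc{Equal}/\textsc{Lcp} arguments collapse. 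If instead the priorities are derived from the symbol's \emph{value} via the fixed seed (which is what canonicity forces), then a run of equal symbols has all-equal priorities: there are no interior local minima, ties cannot be broken position-independently, and the ``shrink by a constant factor'' and ``blocks of two or three'' claims fail on runs such as the all-zero strings that actually arise in the modular subset sum application. The standard fix --- and what both the cited structure and the paper's DDT do --- is to alternate a run-collapsing level (the DDT's \emph{duplicate} levels, or RLE in the recompression literature) with a content-hash-driven grouping level (the DDT's \emph{increasing} levels); your sketch has no such step, and without it the depth bound, the $O(\log n)$-locality of edits, and the $O(1)$-degree assumption behind your per-level LCP cost are all unsupported. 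Note also that once run-collapsing nodes exist, degrees are unbounded and the LCP/update analysis needs an extra argument (the paper's Lemma~\ref{l:dup}) rather than ``$O(1)$ work per level.''

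Two further remarks. First, the paper does not prove Theorem~\ref{t:dynstrings} at all --- it is quoted from \cite{DBLP:conf/soda/GawrychowskiKKL18} --- so any self-contained proof attempt is really competing with the paper's alternative DDT structure, which proves a slightly weaker statement ($O(\log n)$ rather than $O(1)$ \textsc{Lcp}). Second, you concede the $O(1)$ \textsc{Lcp} bound yourself and defer it to the cited paper; that is honest, but it means your argument, even with the run-handling repaired, establishes only the weakened form of the theorem (which, as the paper notes, suffices for its applications, where the constant-time bound additionally relies on the dynamic LCA machinery of Cole and Hariharan).
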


Now observe that we can implement the circular shift operation using one
split and one concatenate operation, in time $O(\log m)$. 
Let $\sigma'_{i-1} \coloneqq (\sigma_{i-1} \gg s_i)$.
Finally, the bitwise disjunction $\sigma_{i-1} \vee \sigma'_{i-1}$ is implemented as follows: 
First find the index of the first bit that differs, which is the length of the LCP incremented by one. 
Then change this bit in one of the two strings so that they both have a 1 at this position. 
Then iterate until both strings are identical. 

There is a twist, however, in that after performing the splits and concatenates to implement the shifting of
$\sigma_{i-1}$ to obtain $\sigma_{i-1} \gg s_i$ we still need the data structure for $\sigma_{i-1}$ to compute $\sigma_{i-1} \vee (\sigma_{i-1} \gg s_i)$, and we cannot afford to make a copy. 
Fortunately, the technique of persistence exists exactly for this purpose. In its simplest form, known as partial persistence, read-only access to previous versions of a data structure are supported. The dynamic string structure of \cite{DBLP:conf/soda/GawrychowskiKKL18} supports persistent access to the strings, and our alternative DDT structure can be made to be partially persistent via the general transformation of \cite{DBLP:journals/jcss/DriscollSST89}. 

To summarize:

\begin{theorem}
Given an integer modulus $m>0$, a target $t$, and a compact representation of a multiset $S$ of integers in $\mathbb{Z}_m$, the following algorithm determines if there is a subset of $S$ that sums to $t$ in  expected $O(m \log m)$ time, and reports it: 
\begin{enumerate}
    \item Initialize a string data structure $D$ with the string $1\overbrace{00\cdots0}^{m-1\text{ times}}$
    \item \textbf{For each} value $s_i$ with multiplicity $\mu_i$ in $S$: \label{outer}
    \item Initialize $A \gets [\emptyset,\overbrace{--\cdots-}^{m-1\text{ times}}]$
    \begin{enumerate}
        \item Use persistence to save $D$ as read-only structure $D_{\text{1}}$.
        \item Set $count \gets 0$
        \item \textbf{Do:} \label{mid}
        \begin{enumerate}
            \item Use persistence to save $D$ as a read-only structure $D_{\text{2}}$.
            \item Do a circular rotation of $D$ by $s_i$ using one split and one concatenate.
            \item \textbf{While} $\textbf{not }D.\textsc{Equal}(D_{\text{2}})$ \label{inner}
            \begin{enumerate}
                \item Let $k \gets D.LCP(D_{\text{2}})+1$
                \item If $D.\textsc{Get}[k]=0$:  \\ \hspace*{8pt} $D.\textsc{Set}(k,1)$ 
                 \\ \hspace*{8pt} 
                $A[k]\gets s_i$
                \item If $D_{2}.\textsc{Get}[k]=0$:  \\ \hspace*{8pt} 
                $D_{2}.\textsc{Set}(k,1)$
            \end{enumerate}
            \item Increment $count$
        \end{enumerate}
         \textbf{until} $count = \mu_i$ or $D.\textsc{Equal}(D_{\text{1}})$.
    \end{enumerate}
\item If the $t$th character of the string stored in $D$ is $1$ there is a subset. Use Equation~\ref{e:sets} to report it using $A$
\end{enumerate}
\end{theorem}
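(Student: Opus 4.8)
The plan is to prove correctness and the $O(m\log m)$ bound separately, each from a short invariant.

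For \textbf{correctness}, write $S^{\le i}$ for the sub-multiset of $S$ containing all copies of the first $i$ distinct values, and let $\Sigma(i)\subseteq\mathbb{Z}_m$ be its set of subset sums. I would show, by induction on the outer loop~(\ref{outer}), that after its $i$-th iteration $D$ holds the length-$m$ string whose $j$-th bit is $1$ exactly when $j\in\Sigma(i)$. The base case is the initialization $D=10^{m-1}$, which encodes $\{0\}$. For the inductive step, the circular rotation implemented by one \textsc{Split} and one \textsc{Concatenate} replaces the contents of $D$ by the indicator of $\Sigma(i-1)+s_i$; the inner \textbf{while} loop~(\ref{inner}) then forms the bitwise \textsc{or} of this with the saved copy, since at each pass it uses a single \textsc{Lcp} to locate the first position where the two operands differ and raises that bit in whichever operand is $0$ there, so their Hamming distance strictly decreases and both converge to the \textsc{or}. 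An inner induction on the pass counter of~(\ref{mid}) then gives that after $c$ passes $D$ encodes $\bigcup_{\ell=0}^{c}(\Sigma(i-1)+\ell s_i)$, which at $c=\mu_i$ equals $\Sigma(i)$; and the early exit is sound because once a pass leaves $D$ unchanged the current set is already closed under adding $s_i$, so every remaining pass for $s_i$ is vacuous. Hence at termination $D$ encodes the full set $S_n$, and the last step reads off whether $t\in S_n$.

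For the \textbf{witness}, I would check that Equation~\ref{e:sets} reconstructs a genuine sub-multiset of $S$. The invariant to maintain is that $A[k]$ always holds some value $s_i$ for which $k-s_i\Mod m$ is realizable at the time $A[k]$ is written — indeed, if bit $k$ first becomes $1$ during pass $c$ for value $s_i$, then $k=x+cs_i$ for some $x\in\Sigma(i-1)$ with $x+(c-1)s_i=k-s_i$ already realizable. Now assign to each realizable residue the index, in a single left-to-right enumeration of all middle-loop passes, of the pass in which it first became realizable; the chain $k,\,k-A[k],\,k-A[k]-A[k-A[k]],\dots$ used in Equation~\ref{e:sets} strictly decreases this index, so it consists of pairwise distinct residues and halts after at most $m-1$ steps. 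Along any maximal run of a single value $s_i$ in this chain, the corresponding pass indices strictly decrease and all lie among the $\mu_i$ passes devoted to $s_i$, so $s_i$ is used at most $\mu_i$ times; hence the reported multiset lies in $S$ and its elements sum to $t\Mod m$.

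For the \textbf{running time}, the accounting is driven by the population count $|D|$, which is $1$ initially, only increases, and never exceeds $m$. Every iteration of~(\ref{inner}) raises at least one operand bit from $0$ to $1$, and in a single pass the rotated string and the saved copy each gain exactly $\Delta|D|$ bits (a circular shift preserves cardinality), so that pass runs~(\ref{inner}) at most $2\Delta|D|$ times; summing over all passes telescopes to $O(m)$ iterations of~(\ref{inner}) over the whole run, each costing $O(\log m)$. A pass whose inner loop runs at least once strictly increases $|D|$, so there are $O(m)$ such passes; and for each distinct value the middle loop runs at most one further pass before it detects that a pass has left $D$ unchanged and exits, so there are $O(m)$ ``idle'' passes as well. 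Each pass costs $O(\log m)$ for its persistence snapshots, the \textsc{Split}/\textsc{Concatenate} rotation, and the \textsc{Equal} tests; the $O(m)$ distinct values contribute an extra $O(\log m)$ apiece for saving $D$ as $D_1$; and the reconstruction costs $O(m)$. The total is $O(m\log m)$, with high probability because the only randomized costs are the $O(m)$ update operations of Theorem~\ref{t:dynstrings} (or of the DDT made partially persistent via~\cite{DBLP:journals/jcss/DriscollSST89}), each $O(\log m)$ w.h.p., to which a union bound applies.

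The step I expect to be the main obstacle is this running-time accounting. Correctness is a routine double induction, but for the bound one must see that the inner loop is prepaid by the monotone growth of $|D|$ — in particular the observation that each pass's two operands gain the same number of bits, which makes the per-pass cost telescope — and, more delicately, that the early-exit test keeps a value with an astronomically large multiplicity $\mu_i$ from triggering $\Theta(\mu_i)$ near-empty passes, so that the number of middle-loop passes is $O(m)$ rather than $\Theta(\sum_i\mu_i)$. A secondary technicality is the persistence ``twist'' flagged before the theorem: the read-only snapshots $D_1$ and $D_2$ must stay usable while $D$ (and a mutable working copy used to form the \textsc{or}) is updated, which is exactly what partial persistence supplies; one should check that every snapshot access the algorithm issues is read-only.
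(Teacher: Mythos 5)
Your proposal is correct and follows essentially the same route as the paper's proof: the same dynamic-programming/string-encoding correctness argument, the same count of middle-loop passes (at most one ``idle'' pass per distinct value plus one pass per change to $D$, at most $m-1$ changes since bits only turn from $0$ to $1$), and the same key identity that a circular shift preserves cardinality, so the per-pass Hamming distance is twice the number of newly set bits and the inner-loop work telescopes to $O(m)$ iterations at $O(\log m)$ w.h.p.\ each. Your treatment is in fact more detailed than the paper's on two points it leaves implicit --- the inductive correctness statement and the argument that Equation~\ref{e:sets} with the value-array $A$ respects multiplicities --- but these elaborate rather than depart from the paper's argument.
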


\begin{proof}
The correctness of the algorithm follows from the earlier discussion, we focus on the runtime.
For the runtime of the outermost loop, recall that $n$ is the total size of the multiset $S$, but that the total number of distinct items in $S$ is at most $m$, and we have placed no restriction on the relationship of $m$ and $n$. 
As such, the outer loop, step~\ref{outer}, runs at most $m$ times. 

For the mid-level loop, step~\ref{mid}, a simple analysis says that this is run at most once for each element in $S$ and thus is run at most $n$ times total. However, as observed by \cite{DBLP:conf/soda/GawrychowskiKKL18}, if this loop runs without changing $D$, then subsequent iterations of the loop with the same value of $s_i$ will also not change $D$ and can be skipped. Thus the loop is only run once for each distinct value in $S$, of which there are $m$, plus once for every time $D$ changes. This can happen at most $m-1$ times as $D$ begins with $m-1$ zero-bits and the only change is to make a bit one.

For the runtime of the innermost loop, step~\ref{inner} we need a simple observation, where $\ominus$ is the symmetric difference of the sets (\textsc{Xor}).
\begin{equation} \label{triangle}
|S_{i-1} \ominus (S_{i-1}+s_i)| = 2 \cdot |S_i \setminus S_{i-1}|.
\end{equation}
Summing \eqref{triangle} yields:
$$
\sum_{i=2}^{n}|S_i \ominus (S_{i-1}+s_i)|
= \sum_ {i=2}^{n} 2|S_i \setminus S_{i-1}| \leq 2|S_n| \leq 2m.
$$
The sum, $\sum_{i=2}^{n}|S_{i-1} \ominus (S_{i-1}+s_i)|$, is exactly the number of times a differing bit will be detected and corrected in step~\ref{inner}, and thus this loop will run in at most $2m$ times total. 
For the computing of the contents of the set, the recursion from~\eqref{e:sets} runs in time $O(m)$ by construction.

In summary, each line of the psudocode will run at most $O(m)$ times, and each line possibly calls the string structure, at a cost of $O(\log m)$ w.h.p.~per call, for a total runtime of $O(m \log m)$ w.h.p. 
\end{proof}

\section{Zero-sum sets of prescribed size}
\label{s:zero}

Every sequence of $n$ elements of $\mathbb{Z}_n$ has a zero-sum contiguous subsequence. This is a standard application of the pigeonhole principle: define the partial sums $s_i$ of the first $i$ elements, and observe that if none of them is $0$, then two of them must be equal. What is the computational complexity of finding such a subsequence? In this case, it is simply a matter of solving the element distinctness problem, which can be done in time $O(n \log n)$.
A fundamental result of zero-sum Ramsey theory is the existence of zero-sum subsets in of \emph{prescribed cardinalities} in any multiset of $2n - 1$ items in $\mathbb{Z}_n$:

\begin{theorem}
(Erdős-Ginzburg-Ziv \cite{GEZ})
Every multiset of $2n - 1$ elements of $\mathbb{Z}_n$ contains a subset of exactly $n$ elements the sum of which is $0 \pmod{n}$.
\end{theorem}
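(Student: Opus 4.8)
The plan is to prove the Erdős–Ginzburg–Ziv theorem first for the case where $n = p$ is prime, and then bootstrap to all $n$ by induction on the number of prime factors. For the prime case, I would use the Chevalley–Warning approach: given $2p-1$ residues $a_1,\dots,a_{2p-1} \in \mathbb{Z}_p$, introduce indicator variables $x_1,\dots,x_{2p-1}$ and consider the two polynomials $f_1 = \sum_{i} x_i^{p-1}$ and $f_2 = \sum_{i} a_i x_i^{p-1}$ over $\mathbb{Z}_p$. Each has degree $p-1$, so $\deg f_1 + \deg f_2 = 2p-2 < 2p-1$, the number of variables; by the Chevalley–Warning theorem the number of common zeros of $f_1$ and $f_2$ is divisible by $p$. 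Since $x = (0,\dots,0)$ is one such zero, there is a nonzero solution. At a nonzero solution, Fermat's little theorem gives $x_i^{p-1} = 1$ for $x_i \neq 0$ and $0$ otherwise, so letting $I = \{ i : x_i \neq 0\}$ we get $|I| \equiv 0 \pmod p$ from $f_1$ and $\sum_{i \in I} a_i \equiv 0 \pmod p$ from $f_2$; since $1 \le |I| \le 2p-1$ we must have $|I| = p$, which is exactly the desired subset.

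Next I would handle composite $n$ by writing $n = p \cdot q$ with $p$ prime and proceeding by strong induction, the base case $n$ prime being done above (the case $n=1$ being trivial). Given $2n-1 = 2pq - 1$ elements of $\mathbb{Z}_n$: since $2pq - 1 \ge 2p - 1$, repeatedly extract disjoint $p$-subsets whose sums are $\equiv 0 \pmod p$ — one can do this $2q-1$ times, because after pulling out $k < 2q - 1$ such subsets there remain $2pq - 1 - kp \ge 2pq - 1 - (2q-2)p = 2p - 1$ elements, enough to apply the prime case again. This yields $2q-1$ disjoint blocks $B_1,\dots,B_{2q-1}$, each of size $p$, with block sum $\frac{1}{p}\sum_{j \in B_\ell} a_j$ well-defined in $\mathbb{Z}_q$. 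Apply the induction hypothesis at modulus $q$ to these $2q - 1$ values: some $q$ of the blocks, say indexed by $L$, have $\sum_{\ell \in L} \frac{1}{p} \sum_{j \in B_\ell} a_j \equiv 0 \pmod q$. Then $\bigcup_{\ell \in L} B_\ell$ has exactly $pq = n$ elements and its total sum is divisible by $p$ (each block contributes a multiple of $p$) and by $q$ (by the choice of $L$), hence by $n$.

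The main obstacle — and the only genuinely nontrivial ingredient — is the Chevalley–Warning theorem (equivalently, one could prove the prime case directly via Cauchy–Davenport on sumsets, which avoids invoking Chevalley–Warning but requires its own lemma). Everything else is a clean reduction. Since the excerpt only asks for the existence statement, I would keep the proof at this level; the algorithmic refinement to $O(n\log n)$ time is deferred to the constructive treatment that follows, where the same two-level structure (prime base case, then block-merging recursion) will be implemented using the modular subset sum algorithm of the previous section to realize each Chevalley–Warning / induction step efficiently.
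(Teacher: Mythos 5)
Your proof is correct, but the prime case takes a genuinely different route from the paper. The paper follows the original EGZ argument: sort the $2p-1$ residues, dispose of the case of $p$ equal consecutive values, form the $p-1$ nonzero differences $b_i = a_{i+p}-a_{i+1}$, and invoke Cauchy--Davenport to solve $\sum_{b_i\in B'} b_i \equiv -c \pmod p$, i.e.\ a modular subset sum instance. You instead invoke Chevalley--Warning on $f_1=\sum_i x_i^{p-1}$ and $f_2=\sum_i a_i x_i^{p-1}$, which is shorter, handles $p=2$ uniformly (the paper's Cauchy--Davenport statement is restricted to $p>2$), and avoids the sorting/swapping bookkeeping --- but it is non-constructive: finding a nontrivial common zero of $f_1,f_2$ is not obviously doable in near-linear time, so your closing remark that the ``Chevalley--Warning step'' will later be realized via the modular subset sum algorithm is optimistic; the algorithmic version really needs the Cauchy--Davenport formulation, which is exactly why the paper proves the theorem that way (the equation over the $b_i$ \emph{is} the subset sum call that drives the $O(n\log n)$ algorithm). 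Your composite case is the same block-extraction-plus-recursion reduction as the paper's, with the counting made more explicit; one wording nit there: ``divisible by $p$ and by $q$, hence by $n$'' is not a valid inference when $\gcd(p,q)>1$ (e.g.\ $n=4$), but your actual setup already gives the right argument --- the total is $p$ times $\sum_{\ell\in L}\frac1p\sum_{j\in B_\ell}a_j$, which is $\equiv 0 \pmod q$, hence the total is $\equiv 0 \pmod {pq}$ --- so only the phrasing, not the substance, needs fixing.
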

We consider the computational EGZ problem: Given a sequence of $2n - 1$ elements of $\mathbb{Z}_n$, identify $n$ of them that sum to $0$. This is a search problem, since from the EGZ Theorem the decision problem is trivial.
We present an algorithm that runs in time $O(n \log n)$ w.h.p. This improves on the $O(n^2)$ algorithm from Del Lungo, Marini, and Mori \cite{DBLP:journals/dm/LungoMM09}. The algorithm is based on the original proof of the Theorem, itself based on a version of the Cauchy-Davenport Theorem. See Alon and Dubiner for a discussion of various alternative proofs \cite{Alon93zero-sumsets}.

\subsection{Original proof of the EGZ Theorem}
First we state the Cauchy---Davenport theorem:

\begin{theorem}
(Cauchy~\cite{c}---Davenport~\cite{10.1112/jlms/s1-10.37.30}).
Let $p > 2$ be a prime and $B = b_1, b_2, \ldots , b_{p-1}$ a multiset of $p-1$ nonzero elements of $\mathbb{Z}_p$. Then, every element of $\mathbb{Z}_p$ 
can be written as the sum of elements of a subset of $B$.
\end{theorem}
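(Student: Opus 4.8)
The plan is to follow the incremental subset-sum viewpoint of~\S\ref{s:mss}. For $0 \le i \le p-1$ let
$B_i \coloneqq \left\{ \sum_{j \in X} b_j \Mod{p} \;\middle|\; X \subseteq [i] \right\}$
be the set of residues realized as sums of sub-multisets of $b_1,\ldots,b_{i}$, so that $B_0 = \{0\}$ and, in the notation $S+x$ introduced earlier, $B_i = B_{i-1} \cup (B_{i-1} + b_i)$. The statement to prove is $B_{p-1} = \mathbb{Z}_p$, and I would obtain it from the size bound
\[ |B_i| \;\ge\; \min(p,\, i+1), \qquad 0 \le i \le p-1, \]
which for $i = p-1$ forces $|B_{p-1}| \ge p$ and hence $B_{p-1} = \mathbb{Z}_p$, i.e.\ every residue is a subset sum.

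The bound is proved by induction on $i$. The base case $i=0$ is $|B_0| = 1 = \min(p,1)$. For the inductive step, assume $|B_{i-1}| \ge \min(p,i)$. If $|B_{i-1}| = p$ then $B_i \supseteq B_{i-1} = \mathbb{Z}_p$ and we are done, so suppose $|B_{i-1}| < p$; it then suffices to show $|B_i| \ge |B_{i-1}| + 1$. Since $|B_{i-1} + b_i| = |B_{i-1}|$, this is equivalent to showing that $B_{i-1} + b_i \not\subseteq B_{i-1}$, i.e.\ that $B_{i-1}$ is not invariant under translation by $b_i$. But a nonempty translation-invariant subset of $\mathbb{Z}_p$ is a union of cosets of the subgroup $\langle b_i \rangle$, and since $b_i \ne 0$ and $p$ is prime we have $\langle b_i \rangle = \mathbb{Z}_p$, so the only such subset is $\mathbb{Z}_p$ itself --- contradicting $|B_{i-1}| < p$. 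Hence $|B_i| \ge |B_{i-1}| + 1 \ge \min(p,i) + 1 \ge \min(p,i+1)$, completing the induction.

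Alternatively, and closer to the ``version of the Cauchy--Davenport Theorem'' alluded to in the text, one can phrase the step as a sumset inequality: writing $B_{i-1} \cup (B_{i-1} + b_i) = \{0, b_i\} + B_{i-1}$ as a sumset in $\mathbb{Z}_p$, the Cauchy--Davenport bound gives $|\{0,b_i\} + B_{i-1}| \ge \min(p,\, 2 + |B_{i-1}| - 1) = \min(p,\, |B_{i-1}| + 1)$, and the induction proceeds identically; since the excerpt does not supply the sumset inequality as a black box, I would prefer to present the self-contained argument above.

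I do not expect a serious obstacle here. The one place requiring care is exactly the step where the size strictly increases: this is where both hypotheses --- $p$ prime and each $b_i$ nonzero --- are used, via the fact that no proper nonempty subset of $\mathbb{Z}_p$ is invariant under translation by a nonzero element. Everything else is a routine induction, and the multiset nature of $B$ causes no difficulty since distinctness of the $b_j$ is never used.
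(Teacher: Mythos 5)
Your proof is correct. Note, however, that the paper itself offers no proof of this statement: it is quoted as a classical theorem with citations to Cauchy and Davenport and then used as a black box in the EGZ argument, so there is no internal proof to compare against. Your argument is the standard elementary one: induct on $i$, showing $|B_i|\ge\min(p,i+1)$ because a proper nonempty subset of $\mathbb{Z}_p$ cannot be invariant under translation by a nonzero element (here both hypotheses, $p$ prime and $b_i\neq 0$, are used, exactly as you point out), and the containment-implies-equality step via $|B_{i-1}+b_i|=|B_{i-1}|$ is sound. A pleasant side effect of your formulation is that the sets $B_i$ are precisely the subset-sum sets $S_i$ of the paper's dynamic-programming recurrence, so your proof doubles as a correctness certificate for the growth bound $|S_i|\ge |S_{i-1}|+1$ (while $S_{i-1}\neq\mathbb{Z}_p$ and $s_i\neq 0$) that underlies the $O(m)$ bound on total bit changes; the alternative phrasing via the sumset inequality $|\{0,b_i\}+B_{i-1}|\ge\min(p,|B_{i-1}|+1)$ is equally fine but, as you say, would import the Cauchy--Davenport inequality rather than prove the needed special case directly.
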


\paragraph{When $n$ is prime.} 
We first prove the EGZ theorem for $n=p$, with $p$ prime. 
We arrange the $2p-1$ residues $a_1 \leq a_2 \leq \cdots \leq a_{2p-1}$ in increasing order. 
First suppose that $a_i=a_{i+1}= \cdots = a_{i+p-1}$ for some $i$. Then $\sum_{j=i}^{i+p-1}a_j = pa_i \equiv 0 \Mod{p}$.
Hence we can assume that $a_i \not = a_{i+p-1}$ for all $i \leq p$.

Let $\sum_{i=1}^p a_i \equiv c \not \equiv 0 \Mod{p}$ and $B=\{ b_i \coloneqq a_{i+p} - a_{i+1}| 1 \leq i \leq p-1\}$. 
From our assumption, all the $b_i$ are nonzero, hence we can apply the Cauchy-Davenport Theorem to show that the equation $\sum_{b_i \in B'} b_i \equiv -c \Mod{p}$ for some $B' \subseteq B$ has a solution. For this solution, we obtain the expression
$$
\sum_{i=1}^p a_i + \sum_{b_i \in B'} b_i   \equiv \sum_{i=1}^p a_i + \sum_{b_i \in B'} ( a_{i+p} - a_{i+1}) \equiv 0 \pmod{p}
$$
which is a sum of $p$ elements.

\paragraph{In general.} In order to prove the statement for any $n$, we consider the case where $n = uv$ for two integers $u$, $v > 1$.
By induction, in any sequence of numbers $a_1, a_2, \ldots , a_{2uv-1}$, there are $u$ of them summing to a multiple of $u$. 
We can remove them and repeat $2v - 1$ times, to obtain $2v - 1$ disjoint subsequences of $u$ numbers, each summing to a multiple of $u$. 
Let $c_iu$ denote the sum of the $i$th subsequence, for $1 \leq i \leq 2v - 1$. 
Now again by induction, in the sequence of numbers $c_1, c_2, \ldots , c_{2v-1}$, there are $v$ of them summing to a multiple of $v$. Summing all the $a_i$ in each of these $v$ subsequences, we obtain $uv$ numbers summing to a multiple of $uv$, as wished.

\subsection{Algorithm}

We first consider the case where $n = p$ for some prime $p$. We can compute the value $c$ and the $b_i$ in linear time. We then solve the equation $\sum_{b_i \in B'} b_i \equiv -c \Mod{p}$, $B' \subseteq B$, for $B'$, which is an instance of the modular subset sum problem,
solvable in time $O(p \log p)$ w.h.p.

It remains to consider the case where $n$ is composite. 
Let $u$ be the largest prime factor of $n$, and $v = n/u$. 
The algorithm first runs $2v - 1$ times the algorithm for the prime case with an input of size $2u - 1$, then recurses once on an input of size $2v - 1$. 
The first step takes time $(2v - 1)O(u \log u) = O(n \log u)$ w.h.p. The overall running time is therefore $f (n) \leq O(n \log u) + f (n/u) = O(n \log n)$ w.h.p. Note that we need to find the prime factors of $n$, which can be obtained in time sublinear in $n$, as there are at most $\log n$ prime factors of size at most $\sqrt{n}$.

\section{Data Dependent Trees} \label{s:ddt}
\subsection{Motivation}

Our simple method for modular subset sum depends heavily on the string structure of~\cite{DBLP:conf/soda/GawrychowskiKKL18}.
A fair question is are we simply masking the complexity of our solution by using this complicated string structure as a black box? In \cite{DBLP:journals/corr/GawrychowskiKKL15}, before they delve into the details of their structure they note the following:

\begin{quote}
We note that it is very simple to achieve $O(\log n)$ update time for maintaining a non-persistent family of strings under concatenate and split operations, if we allow  the equality queries to give an incorrect result with polynomially small probability.
We represent every string by a logarithmic-height tree with characters of the string in the leaves and where every node stores a fingerprint of the substring of the sequance represented by its descendant leaves.
However, it is not clear how to make the answers always correct in this approach (even if the time bounds should only hold in expectation). Furthermore, it seems that both computing the longest common prefix of two strings of length $n$ and comparing them lexicographically requires $\Omega(\log^2 n)$ time in this approach. 
This is a serious drawback, as the lexicographic comparison is a crucial ingredient in our applications related to pattern matching.
\end{quote}

We use this simple fingerprinting approach to create a data structure for the dynamic string problem but without the deficiencies mentioned above. 
We can eliminate the chance of wrong equality queries by ensuring that nodes have the same fingerprint if and only if they induce  the same string in their leaves.
More challenging, however, it to reduce the runtime of LCP queries to $O(\log n)$.
Our key realization is this can be done easily if trees/subtrees with the same induced string in their leaves always have the same shape, we call a structure with this property \emph{data dependent}.
Typical balanced tree structures certainly are not data dependent.
We thus present a new tree structure, which is data dependent and which we call a \emph{data dependent tree (DDT)}.
Our structure draws inspiration from the literature on data structures with unique representations \cite{DBLP:journals/siamcomp/SundarT94,DBLP:journals/siamcomp/AnderssonO95,DBLP:conf/swat/BlellochGV08,DBLP:conf/icalp/Golovin09,DBLP:journals/corr/abs-1005-0662}, especially skip lists \cite{DBLP:journals/cacm/Pugh90} and treaps \cite{DBLP:journals/algorithmica/SeidelA96}, but where the source of randomness is the derived from the fingerprints of the data so that parts of the tree with the same data are constructed identically.
Our tree is not binary, as no tree can be binary, have the data dependent properly, and support concatenate in time $o(\sqrt{n})$ \cite{DBLP:conf/focs/Snyder77}.
Our structure, as in previous work, assumes an oblivious adversary.

Our structure can be substituted for that of \cite{DBLP:conf/soda/GawrychowskiKKL18}, and is thus of independent interest. We refer the reader to \cite{DBLP:conf/soda/GawrychowskiKKL18} for a full discussion of the applications of a dynamic string structure. As noted in the introduction, the only difference in runtime is that our LCP queries take time $O(\log n)$ rather than constant, which would only matter if these operations asymptotically dominate updates, and that our runtimes are expected rather than with high probability. In~\cite{DBLP:conf/soda/GawrychowskiKKL18}, the runtime for LCP queries is initially logarithmic and is then sped up to constant with an auxiliary structure which utilizes the constant-query-time dynamic least common ancestor (LCA) query structure of Cole and Hariharan \cite{DBLP:journals/siamcomp/ColeH05}; we avoid such complication here.

\subsection{Description}

The structure is a rooted tree.
We describe it bottom-up, level by level starting from the leaves. All nodes in the tree have a $c \log n$-bit fingerprint, which will be equal if their subtrees contain equal data and we will ensure are different if they are different. We assume a random hash function $h$ to compute the fingerprint.

See Figure~\ref{fig} for an example of the construction we now describe.
The leaf level of the tree contains the string, one character per leaf. 
The hash of a leaf containing character $x$ is $h(x)$. 
The levels of the tree then alternate between \emph{duplicate levels} and \emph{increasing levels}.
If level $\ell$ is a duplicate level, each node (\emph{a duplicate node}) has as children a maximal consecutive set of nodes from level $\ell-1$ which have the same hash.
If level $\ell$ is an increasing level, each node (\emph{a increasing node}) has as children a maximal consecutive set of nodes from level $\ell-1$ which have increasing hashes, left-to-right. 
(We note that a grouping similar to what we propose for increasing levels appeared in \cite{DBLP:journals/algorithmica/MehlhornSU97}, however their approach differs form here in that 
the source of randomness used for the grouping is independent of the data and so efficient LCP queries were not possible).
The hash of an increasing node node of level $\ell$ with children $c_1, c_2, \cdots c_k$ is a hash of the level number and the hashes of the children: $h(\langle \ell, h(c_1), h(c_2), \ldots h(c_k) \rangle)$.
The hash of a duplicate node of level $\ell$ with children $c_1, c_2, \cdots c_k$ is a hash of the level, the number of children, and the hash of one of the children (the hash of all children is identical), $h(\langle \ell, k, h(c_1) \rangle)$.

All nodes are also augmented with the number of leaves in their subtree and the nodes of the tree are level-linked.
We use \emph{neighbor} to refer to an adjacent node in a level and \emph{sibling} to refer to a node that shares the same parent. 
As the tree is not binary and there is no bound on the number if children of a node, we need to describe the data structure a node uses to store its children.
The children of increasing nodes are stored in a linked list. 
The children of duplicate nodes are stored in a search tree that supports the standard operations as well as split and concatenate in time $O(\log n)$, as well as access, insertion, and deletion of the minimal and maximal elements in time $O(1)$ amortized. A 2-3-4 tree with pointers to the first and last elements meets these requirements (see \cite{DBLP:books/el/Leeuwen90}).
Note that when we refer to the shape our structure, we are speaking of the global tree only as visualized in Figure~\ref{fig} and do not take into account the secondary structures to efficiently store the children, which are not shown in Figure~\ref{fig}.

\paragraph{Collisions and height: rebuilding}
So far as described, if two nodes have different hashes they represent different strings, but different strings could by chance hash to the same value. To eliminate this one-sided error, we explicitly check for it.
We maintain a hash table, the \emph{fingerprint table}, containing all hashes used, and the inputs used to calculate each, e.g., for an increasing node, the level number and the hashes of the children. 
When we create a new node, we compute the hash, and check to see if the hash has been used before. 
If it was, we verify that the parameters used to calculate it were the same.
If, however the new node shares the hash of an existing node, but had different inputs, then there is a collision, a new random hash function is chosen, and the structure is rebuilt from scratch. By using a hash function that generates a $c \log n$ bit integer collisions happen with probability at most $\frac{1}{n^c}$ per new hash, which is certainly at most $\frac{1}{n^{c-1}}$ per operation, and thus the cost to check and rebuild if needed is constant expected amortized when $c\geq 3$.

Additionally, we show in Lemma~\ref{l:height} that the height of the tree is $O(\log n)$ with high probability; we choose an $\alpha$ such that if a height of $\alpha \log n$ is surpassed the structure is rebuilt; with $\alpha
$ at least 5 this happens with probability at most $\frac{1}{n^2}$ and thus the amortized expected cost to rebuild is constant for each operation.

\subsection{Properties}

We call the left ID of a node to be the index of the leftmost leaf in its subtree, and define the right ID similarly.
If we view nodes as being drawn with $x$ coordinates based on the left ID, one discovers that our structure bears close resemblance to a skip list \cite{DBLP:journals/cacm/Pugh90}. Here we prove several properties of our structure that will be of use in the analysis of the operations. In skip lists a non-leftmost node is promoted to the next level with probability $1/2$, and we begin by stating the similar fact that in our structure a non-leftmost node as identified by its left ID survives two levels up with a probability at most $\frac{1}{2}$: 

\begin{fact}\label{f:gp}
Given a leaf $n$ with a given left ID, the only nodes that can share $n$'s left ID are those ancestors of $n$ that have $n$ as their leftmost child; these are a connected path starting from $n$.
Given an increasing node $n$ with parent $p$ and grandparent $g$, if $n$ has the same hash as its neighbor to the left, it will not be the left child of its parent, and thus it will not share a left ID with its grandparent. Otherwise, whether $n$'s parent's hash is larger or smaller than its left neighbor determines if $n$'s grandparent will have the same left ID as $n$.
Thus the chance that a non-leftmost $n$ and its grandparent share the same left ID is either zero or a probability $\frac{1}{2}$ event, depending on whether $n$ has the same hash as its left sibling.
\end{fact}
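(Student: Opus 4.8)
The plan is to separate the statement into a deterministic structural part and a probabilistic part. For the structural part, recall that the left ID of a node $v$ is the index of the leftmost leaf of its subtree, and that this leftmost leaf is reached from $v$ by repeatedly descending to the leftmost child. Hence $v$ has the same left ID as a leaf $n$ exactly when $v$ is an ancestor of $n$ (or $v=n$) and the path from $v$ down to $n$ takes the leftmost child at every step; read upward, this says that each node on the path from $n$ to $v$ is the leftmost child of the next one. So the nodes sharing $n$'s left ID form a connected path going up from $n$, terminating at the first ancestor that is not the leftmost child of its parent (or at the root); this is the first sentence of the Fact.

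For the probabilistic part, let $n$ be a non-leftmost increasing node on an increasing level $\ell$, so that its parent $p$ lies on the duplicate level $\ell+1$ and its grandparent $g$ on the increasing level $\ell+2$. Since $n$ is non-leftmost, its subtree misses the first leaf, so $n$ has a left neighbor $n^-$ on level $\ell$; moreover, if $p$ shares $n$'s left ID then $p$ likewise has a left neighbor $p^-$ on level $\ell+1$, whose rightmost child is $n^-$ because the level-$\ell$ nodes partition the leaves left to right. By the definition of a duplicate node there are two cases. If $h(n)=h(n^-)$, then $n$ lies in the same maximal equal-hash run as $n^-$, so $n$ is not the leftmost child of $p$, and by the structural part neither $p$ nor $g$ shares $n$'s left ID, so the event has probability $0$. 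If $h(n)\neq h(n^-)$, then $n$ is the leftmost child of $p$, so $p$ shares $n$'s left ID and $p^-$ exists; by the structural part $g$ shares $n$'s left ID iff $p$ is the leftmost child of the increasing node $g$, which holds iff the maximal increasing run containing $p$ starts at $p$, which (using that adjacent duplicate nodes have distinct hashes) holds iff $h(p^-)>h(p)$. It remains to show $\Pr[h(p^-)>h(p)]=\frac{1}{2}$: from the duplicate-node hash rule, $h(p)=h(\langle \ell+1,|p|,h(n)\rangle)$ and $h(p^-)=h(\langle \ell+1,|p^-|,h(n^-)\rangle)$; since $h(n)\neq h(n^-)$ these tuples differ, and since each hash input records its level, neither tuple was queried while building the leaves or level $\ell$. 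Modeling $h$ as a uniformly random function (equivalently, conditioning on the no-collision event that the rebuild mechanism maintains), $h(p)$ and $h(p^-)$ are then two fresh, independent, uniform values that are independent of all randomness used below and to the left of $p$, so by symmetry $h(p^-)>h(p)$ holds with probability exactly $\frac{1}{2}$, and this survives the conditioning on the left part of the tree that the height bound of Lemma~\ref{l:height} will need.

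The main obstacle is the last step: making precise that $h(p)$ and $h(p^-)$ behave like two fresh, independent uniform draws even after we condition on everything the downstream analysis needs. This rests on (i) the oblivious-adversary assumption, so the set of inputs ever passed to $h$ is fixed independently of $h$; (ii) the fact that each node's fingerprint hashes a tuple that records the node's level (and the child count, for duplicate nodes), so hash inputs on different levels come from disjoint domains and a level-$(\ell+1)$ fingerprint is genuinely new; and (iii) the observation that two adjacent maximal equal-hash runs have distinct underlying hash values, which occupy the distinguishing coordinate of the duplicate-node hash rule, so in the second case ``increasing'' may be read either strictly or non-strictly without changing anything. Granting these, the factor $\frac{1}{2}$ is just the symmetry of a comparison between two independent uniform values.
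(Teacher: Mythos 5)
Your proposal is correct and follows essentially the same reasoning the paper gives inline in the statement of Fact~\ref{f:gp}: the structural characterization of which ancestors share a leaf's left ID, the case split on whether $n$'s hash equals its left neighbor's (duplicate parent) and, otherwise, the comparison of $h(p)$ with $h(p^-)$ at the increasing grandparent being a symmetric, fresh-hash comparison with probability $\tfrac{1}{2}$. Your write-up just makes explicit the details (distinctness of adjacent duplicate-node fingerprints, level-tagged hash inputs, no-collision conditioning) that the paper leaves implicit.
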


Given this fact we can now bound the height of our structure. The proof is basically the same as for skip lists.

\begin{lemma} \label{l:height}
The height of the DDT is at most $2(c+1) \log n$ with probability $1-\frac{1}{n^{c}}$
\end{lemma}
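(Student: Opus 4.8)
The plan is to mimic the standard skip-list height argument, using Fact~\ref{f:gp} to play the role of the ``promotion probability $1/2$'' step. Fix a leaf $n$ and consider the path of ancestors that share $n$'s left ID; call this the \emph{spine} of $n$. By Fact~\ref{f:gp}, going two levels up from any node on the spine, the spine continues with probability at most $\frac{1}{2}$ (and sometimes with probability $0$, when the relevant node has the same hash as its left neighbor). Crucially, the events ``the spine survives from level $2i$ to level $2i+2$'' for distinct $i$ depend on disjoint pieces of the tree structure, so they can be treated as independent (or, more carefully, dominated by independent coin flips each equal to $1$ with probability $\frac{1}{2}$). Hence the probability that the spine of a fixed leaf reaches height $2k$ is at most $2^{-k}$: roughly, to climb $2k$ levels you must win $k$ of these halving events in a row.

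Next I would convert this per-leaf bound into a bound on the total height. The height of the whole tree is the maximum over all leaves of the height reached by that leaf's spine (every node is on the spine of its leftmost leaf, so the tree's height equals $\max_n (\text{height of spine of } n)$, up to an additive constant of $1$ or $2$ for the final grouping levels). There are at most $n$ leaves, so by a union bound the probability that some spine reaches height $2k$ is at most $n \cdot 2^{-k}$. Setting $2k = 2(c+1)\log n$, i.e.\ $k = (c+1)\log n$, gives failure probability at most $n \cdot 2^{-(c+1)\log n} = n \cdot n^{-(c+1)} = n^{-c}$, which is exactly the claimed bound. (Here $\log$ is base $2$; if the paper means natural log one rescales constants, but the structure is the same.)

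I expect the main obstacle to be making the independence claim fully rigorous. In a genuine skip list the coin flips are literally independent by construction; here the ``coins'' are derived deterministically from a single random hash function $h$, so the survival events are functions of $h$ evaluated on different arguments (the hashes along the spine and the hashes of the relevant left neighbors). One must argue that these arguments are distinct with high probability — indeed, distinct inputs to $h$ — so that by the (idealized) random-oracle / truly-random-hash assumption the outputs are independent and uniform, and hence the ``is the parent's hash larger than the left neighbor's?'' comparisons behave like fair independent coins. This is the point where the oblivious-adversary assumption and the fingerprint-table collision-checking (which guarantees distinct data implies distinct hashes) are invoked. Once that is granted, the rest is the textbook geometric-series / union-bound calculation, which I would not belabor. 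A minor additional point to handle cleanly is the alternation between duplicate and increasing levels: Fact~\ref{f:gp} is phrased for an increasing node with its parent and grandparent, so I would note that between two consecutive increasing levels there is exactly one duplicate level, and a node on the spine at a duplicate level must be the leftmost child of its duplicate parent, contributing no extra height blow-up — so counting ``two levels'' as one halving step is legitimate.
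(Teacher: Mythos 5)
Your proposal matches the paper's proof in essentially every respect: fix a leaf, use Fact~\ref{f:gp} to show that the chain of ancestors sharing its left ID survives each pair of levels with probability at most $\frac{1}{2}$, then union-bound over the $n$ leaves with $k=(c+1)\log n$ to get failure probability $n^{-c}$. The paper sidesteps the independence concern you raise by stating the per-step bound as a conditional probability (given the spine reaches level $\ell$, it reaches level $\ell+2$ with probability at most $\frac{1}{2}$), so no independence across levels is needed, and—as the paper explicitly notes—none is needed across leaves either, since the union bound does not require it.
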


\begin{proof}
Let $x$ be a leaf. Let $l(i,\ell)$ be the event that there is a non-leftmost node at level $\ell$ with the same left ID as the $i$th leaf. Given $l(i,\ell)$ is true, for some duplicate level $\ell$, $l(i,\ell+2)$ is true with probability at most $\frac{1}{2}$ by Fact~\ref{f:gp}.
Thus $Pr[l(x,\ell)]$ is at most $\frac{1}{2^{\lfloor \ell/2 \rfloor}}$.

What is $Pr[\bigcup_{i=1}^n l(i,2c\log n)  ]$? From the union bound, we have $\Pr [\bigcup_{i=1}^n l(i,2c\log n)] \leq 1/n^{c-1}$. Note that the events $Pr[l(i,\ell)]$ are \emph{not} independent for each $i$, as they are for skip lists, but that is ok as the union bound does not require independence. 
\end{proof}

One difference with skip lists is that for $k$ nodes with differing hashes to have an increasing parent they must be in increasing order, which happens with probability $\frac{1}{k!}$, whereas in skip list this happens with probability $\frac{1}{2^k}$. This complicates the direct application of Chernoff bounds to compute the length of a search path, as is done in skip lists. We need the following lemma which shows how Chernoff bounds can be used for a limited type of dependent event:

\begin{lemma} \label{l:rps}
If one flips a fair coin until $\log n$ heads appear, this takes more than $c \log n$ flips with probability at most $n^{-c/16}$. 
Suppose instead of using a fair coin, before each flip an adversary can choose an arbitrary biased coin with probability of heads at least $\frac{1}{2}$, and this choice can depend on the results of coin flips so far. Then, the number of biased coin flips needed until $\log n$ heads appear is greater than $c \log n$ with probability at most $n^{-c/16}$.
\end{lemma}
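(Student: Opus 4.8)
The plan is to handle the two statements in sequence, deriving the second from the first by a stochastic-domination argument so that the real work is a single Chernoff computation. For the first statement, let $X_1, X_2, \ldots, X_{c\log n}$ be the outcomes of the first $c\log n$ fair coin flips, and let $Y = \sum_{j=1}^{c\log n} X_j$ be the number of heads among them. The event ``more than $c\log n$ flips are needed to see $\log n$ heads'' is exactly the event $Y < \log n$, i.e.\ fewer than $\log n$ heads appeared in the first $c\log n$ flips. Since $\mathbb{E}[Y] = \tfrac{c}{2}\log n$, and $\log n \le \tfrac{1}{2}\mathbb{E}[Y]$ whenever $c \ge 4$, this is a lower-tail event at half the mean. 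I would apply the multiplicative Chernoff bound $\Pr[Y < (1-\delta)\mathbb{E}[Y]] \le \exp(-\delta^2 \mathbb{E}[Y]/2)$ with $\delta = \tfrac12$, giving $\Pr[Y < \log n] \le \Pr[Y \le \tfrac12 \mathbb{E}[Y]] \le \exp(-\mathbb{E}[Y]/8) = \exp(-\tfrac{c}{16}\log n) = n^{-c/16}$, which is the claimed bound. (For small $c$ where $\log n > \tfrac12 \mathbb{E}[Y]$ the bound $n^{-c/16}$ exceeds $1$ and the statement is vacuous, so we lose nothing.)

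For the second statement, the key observation is a coupling: run the adversarial process and the fair process on a common source of randomness. Before flip $j$ the adversary picks a coin with head-probability $p_j \ge \tfrac12$; draw $U_j$ uniform on $[0,1]$ and declare the adversarial flip a head iff $U_j \le p_j$ and the fair flip a head iff $U_j \le \tfrac12$. Then every fair head is also an adversarial head, so after any fixed number of flips the adversarial head-count stochastically dominates the fair head-count. Consequently the number of adversarial flips until $\log n$ heads appear is stochastically dominated by the number of fair flips until $\log n$ heads appear, and the tail bound from the first part transfers verbatim. I would phrase this cleanly by noting that the adversarial stopping time $T_{\mathrm{adv}}$ satisfies $\Pr[T_{\mathrm{adv}} > c\log n] \le \Pr[T_{\mathrm{fair}} > c\log n] \le n^{-c/16}$, using that $\{T > k\}$ is the event ``$<\log n$ heads in the first $k$ flips'' for both processes and that the fair head-count is pointwise $\le$ the adversarial one under the coupling.

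The only mild subtlety—and the part I expect to need the most care—is making the stochastic-domination step rigorous when the adversary is adaptive: the bias $p_j$ is a random variable depending on $U_1, \ldots, U_{j-1}$, not a fixed constant, so one cannot simply quote domination of products of independent Bernoullis. The clean fix is to do the coupling flip-by-flip as above on the \emph{same} uniforms $U_j$, which makes the pointwise inequality ``adversarial head-count $\ge$ fair head-count after $k$ flips'' hold surely, for every realization and every $k$, regardless of how $p_j$ was chosen; adaptivity becomes irrelevant because the inequality $U_j \le \tfrac12 \Rightarrow U_j \le p_j$ holds whatever value $p_j \ge \tfrac12$ takes. Everything else is the routine Chernoff estimate already carried out in the first paragraph, so no further computation is needed.
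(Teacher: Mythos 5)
Your proposal is correct and takes essentially the same route as the paper: your flip-by-flip coupling via uniforms $U_j$ (fair head iff $U_j \le \tfrac12$, adversarial head iff $U_j \le p_j$) is exactly the paper's rock--paper--scissors decomposition, with ``rock'' playing the role of the fair head that is always also a biased head, so pointwise domination plus a Chernoff bound on the independent fair-head process yields the claim, adaptivity notwithstanding. One small slip: your parenthetical that the statement is vacuous for $c<4$ is not right ($n^{-c/16}<1$ there, and the stated bound genuinely needs roughly $c\ge 4$, failing e.g.\ at $c=1$), though the paper's own ``textbook Chernoff'' step is equally cavalier on this point and the lemma is only ever invoked with large $c$.
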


\begin{proof}
The statement about fair coins is textbook application of Chernoff bounds.
For the statement about biased coins, when choosing a coin of probability $p \geq {1}{2}$, we instead imagine an event with three outcomes: rock with probability $\frac{1}{2}$, paper with probability  $p-\frac{1}{2}$ and scissors with probability $1-p$. 
Thus rock-or-paper corresponds to a biased head and scissors corresponds to a biased tails.
As rock occurs in each event independently with probability $\frac{1}{2}$, Chernoff applies and thus we know the probability of more than $c \log n$ trials to obtain $\log n$ rocks is less than $n^{-c/16}$. 
The probability of more than $c \log n$ trials to have $\log n$ rocks-or-paper events is thus less than the probability of more than than $c \log n$ events to have $\log n$ rock events and thus is less than $n^{-c/16}$.
\end{proof}

\begin{fact} \label{f:different}
Given two nodes in the same structure, or in different DDTs constructed with the same hash function, their induced strings are identical if and only if the hashes of the nodes are equal.
\end{fact}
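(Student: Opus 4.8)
### Proof proposal for Fact~\ref{f:different}

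\textbf{Overall approach.} The plan is to prove both directions, with the ``equal hashes $\Rightarrow$ equal strings'' direction being the one that carries real content. The forward direction (``equal strings $\Rightarrow$ equal hashes'') should follow by a straightforward induction on the height of the two nodes, using the determinism of the construction: the shape of the tree rooted at a node, and hence every fingerprint in it, is a function only of the induced string and the (shared) hash function $h$. I would prove this ``canonical form'' statement first as it is needed anyway, phrasing it as: any two nodes with the same induced leaf string have subtrees of identical shape and identical fingerprints at every corresponding node. The induction goes level by level: leaves with the same character get hash $h(x)$; given that level $\ell-1$ is reconstructed identically, the grouping rule at level $\ell$ (maximal runs of equal hashes on duplicate levels, maximal increasing runs on increasing levels) is a deterministic function of the sequence of level-$(\ell-1)$ hashes, so level $\ell$ — nodes, child sets, and fingerprints — is determined identically. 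One subtlety to address: the construction alternates duplicate/increasing levels starting from a fixed parity at the leaves, so two nodes at the \emph{same} height are built with the same level-type schedule; I should note that ``level number $\ell$'' appearing inside the fingerprint formulas is consistent between the two copies because both subtrees sit at the same height (or, if comparing across two DDTs, we compare nodes at equal levels).

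\textbf{The reverse direction.} For ``equal hashes $\Rightarrow$ equal strings'' I would invoke the collision-elimination mechanism: the fingerprint table stores, for every hash value ever produced, the exact tuple of inputs used to compute it, and whenever a newly created node produces a previously-seen hash with a \emph{different} input tuple, a collision is declared and the whole structure is rebuilt with a fresh $h$. Consequently, at any point in time, within the current structure (and across structures sharing the current $h$, provided they were all built without triggering a rebuild) the map from hash values to input tuples is injective. Given two nodes with equal hash, their input tuples therefore coincide: same level number, and for increasing nodes the same sequence $h(c_1),\dots,h(c_k)$ of children hashes; for duplicate nodes the same $(\ell, k, h(c_1))$ and — since all children of a duplicate node share one hash — the same multiset of $k$ children hashes. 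Now induct downward on height: equal children-hash sequences plus the inductive hypothesis give equal induced substrings for each corresponding child, and concatenating these in order yields equal induced strings for the two nodes. The base case is two leaves: equal hash forces equal input, i.e. the same character $x$.

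\textbf{Main obstacle.} The delicate point is making the injectivity claim precise about \emph{when} it holds. It is not true that $h$ is globally injective on all tuples ever hashed across all time — it is only injective on the tuples recorded in the \emph{current} fingerprint table, because a rebuild resets everything. So the statement ``in the same structure, or in different DDTs constructed with the same hash function'' must be read as: with respect to the hash function currently in force, and for structures all of whose construction history (under that $h$) completed without a detected collision. I would state this scoping explicitly at the start of the proof — essentially, ``fix the current hash function $h$; by the rebuild rule no two distinct input tuples realized under $h$ share a hash'' — and then the rest is the clean two-way induction sketched above. A secondary, minor obstacle is confirming that the per-node augmentation (leaf counts, level links, the secondary 2-3-4 trees storing children) is irrelevant here: these are not inputs to the hash and not part of ``the shape'', so they do not affect the argument, and I would dispatch this in one sentence by pointing back to the remark in the Description that ``shape'' refers only to the global tree of Figure~\ref{fig}.
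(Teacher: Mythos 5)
Correct, and essentially the paper's own approach: the paper states Fact~\ref{f:different} without an explicit proof, its intended justification being exactly your two ingredients — the data-dependent (deterministic, locally defined) construction, so equal induced strings yield identical subtrees and hence equal fingerprints, and the fingerprint-table/rebuild mechanism, which makes the map from hash values to input tuples injective under the hash function currently in force, giving the converse by downward induction. Your explicit scoping (same hash function, no intervening rebuild, comparisons between nodes at equal levels) is a sound sharpening of the bare statement and is consistent with the paper's own caveat that queries spanning a rebuild must be forbidden.
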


\begin{lemma} \label{l:inc}
The number of increasing nodes on the path from the root to the $i$th leaf is $O(\log n)$ w.h.p.
\end{lemma}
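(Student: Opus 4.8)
The plan is to read this off from the height bound of Lemma~\ref{l:height}. The key structural observation is that the DDT has \emph{uniform depth}: at each level $\ell$, every node of level $\ell-1$ lies in exactly one maximal consecutive run (an equal-hash run or an increasing run, according to the parity of $\ell$), hence has exactly one parent at level $\ell$. Consequently every leaf sits at level $0$, the root is the unique node at the top level $H$, and the path from the root to the $i$th leaf has length exactly $H$, visiting one node per level. In particular it contains at most $H+1$ nodes, so by Lemma~\ref{l:height} it contains at most $2(c+1)\log n + 1 = O(\log n)$ nodes, and a fortiori $O(\log n)$ increasing nodes, with probability $1-n^{-c}$.

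If one prefers a self-contained argument not routed through the full height statement, the skip-list-style computation in the proof of Lemma~\ref{l:height} applies verbatim to the path in question. Let $l(i,\ell)$ be the event that some non-leftmost node at level $\ell$ shares the left ID of the $i$th leaf. By Fact~\ref{f:gp}, together with the observation that $l(i,\ell+2)$ implies $l(i,\ell)$ (so that only a bound on conditional probabilities is needed, not independence), one obtains $\Pr[l(i,\ell)] \le 2^{-\lfloor \ell/2 \rfloor}$. Setting $\ell = 2c\log n$ and taking a union bound over the at most $n$ leaves shows that with probability $1-n^{-(c-1)}$ no root-to-leaf path, and in particular not the path to leaf $i$, has more than $2c\log n$ nodes, which bounds the increasing nodes on it by $2c\log n = O(\log n)$. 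Note that Lemma~\ref{l:rps} is not required for this particular lemma; it is presumably needed later, for bounding the number of siblings scanned at the nodes of a search path rather than the number of such nodes.

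There is no real obstacle here: the only point requiring care is the uniform-depth claim, which is what guarantees that the path to leaf $i$ cannot be longer than the tree's height, after which the lemma is a one-line corollary of Lemma~\ref{l:height}. As a bonus, recall that the rebuilding rule of this section triggers a rebuild whenever the height would exceed $\alpha\log n$, so on the maintained structure the bound in fact holds deterministically, with the $O(n^{-2})$ rebuild probability already absorbed into the amortized cost of each operation; this is what makes the ``with high probability'' qualifier in the statement clean.
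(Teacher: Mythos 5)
Your argument does prove the statement as it is literally worded: the tree has uniform depth, the root-to-leaf path visits one node per level, and the height is at most $\alpha\log n$ (indeed enforced by the rebuild rule), so the count of increasing nodes on the path is trivially $O(\log n)$. But this misses what the lemma is actually doing in the paper, and the mismatch matters. The children of an increasing node are stored in a plain linked list and a node's degree is unbounded, so when the paper invokes Lemma~\ref{l:inc} (in Corollary~\ref{c:traverse}, and again in the LCP and Split analyses, where it is cited as bounding ``the time to move to children of increasing nodes''), what is needed is not the number of increasing nodes on the path but the total number of children scanned at those nodes, i.e.\ the sum over the path of the rank of the child descended into. A bound of one node per level says nothing about that sum. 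The paper's proof is precisely a proof of this stronger quantity: it walks backwards from the leaf, each ``left'' step being one unit of scanning work, and shows that at every step the probability of going up is at least $\tfrac12$, because the left neighbor joins the increasing run $I$ only if its hash is below the minimum of $I$, an event of probability $\tfrac{1}{|I|+1}\le\tfrac12$; Lemma~\ref{l:rps} is then exactly what absorbs the dependence between steps. So your remark that Lemma~\ref{l:rps} ``is not required for this particular lemma'' and that the sibling-scanning bound is ``presumably needed later'' identifies the right missing ingredient but misplaces it: there is no later lemma, this \emph{is} that lemma (its statement is just loosely phrased), and with only your height-based argument the runtime claims for Get, LCP, and Split would no longer be supported.

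In short: correct for the literal statement, but the essential content of the paper's proof --- the left-versus-up walk, the $\tfrac{1}{|I|+1}$ rank argument, and the application of Lemma~\ref{l:rps} to dependent coin flips --- is absent, and it is that content, not the node count, that the rest of Section~\ref{s:ddt} consumes.
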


\begin{proof}
Consider walking backwards from the $i$th leaf to the root. At each step we either go left or up. We know the height is bounded by $\alpha \log n$, which bounds the number of times we move up. We view the choice to go left or up in the following cases, where $x$ denotes the current node, and $I$ denotes $x$ and its right siblings, which form an increasing sequence. If we are at the leftmost node on a level, we always go up. If the node to the left of $x$ is has the the same hash as a node in $I$, then we go up. Otherwise, we compare the hash of the node to the left of $x$ to the hash of $x$, if it is smaller we go left and if it is larger we go up. The chance that it is smaller that all elements of $I$ is $\frac{1}{|I|+1}$, which is at most $\frac{1}{2}$. Applying Lemma~\ref{l:rps}, we obtain the result.
\end{proof}

\begin{lemma} \label{l:dup}
The sum of the logarithms of the degrees of the duplicate nodes on the path from the root to a leaf is $O(\log n)$.
\end{lemma}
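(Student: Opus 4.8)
The plan is to relate the duplicate nodes on the root-to-leaf path to the structure of repeated hashes among the leaves, and to bound the total ``work'' $\sum_\ell \log d_\ell$ (where $d_\ell$ is the degree of the duplicate node at level $\ell$ on the path) by a telescoping argument on the number of leaves lying strictly below successive duplicate nodes on that path. The key observation I would establish first is that when a duplicate node $v$ at level $\ell$ has degree $d$, all $d$ of its children have identical hash, hence by Fact~\ref{f:different} induce identical strings; consequently the number of leaves in $v$'s subtree is at least $d$ times the number of leaves under each child, i.e.\ at least $d$ times the number of leaves under the particular child of $v$ through which the root-to-leaf path passes. In fact it is exactly $d$ times that quantity, since siblings of a duplicate node induce equal strings and therefore subtrees with equal leaf counts.

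With that in hand I would set up the telescoping. Let $v_1, v_2, \ldots, v_r$ be the duplicate nodes encountered on the path from the leaf up to the root, in order, with degrees $d_1, \ldots, d_r$, and let $N_k$ denote the number of leaves in the subtree rooted at $v_k$, with $N_0 = 1$ for the leaf itself. Walking up from $v_{k-1}$'s subtree to $v_k$'s subtree, the leaf count can only grow, and at the step where we pass through the duplicate node $v_k$ it grows by a factor of at least $d_k$ (by the previous paragraph, applied to the child of $v_k$ on the path, whose subtree leaf count is $\ge N_{k-1}$). Therefore $N_k \ge d_k \, N_{k-1}$, so $\prod_{k=1}^r d_k \le N_r \le n$. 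Taking logarithms gives $\sum_{k=1}^r \log d_k \le \log n$, which is exactly the claimed bound (and note this holds deterministically, with no high-probability qualifier — consistent with the lemma statement).

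The main obstacle, and the step I would be most careful about, is justifying the multiplicative growth cleanly: I must make sure that the child of $v_k$ through which the path passes really does contain at least $N_{k-1}$ leaves, i.e.\ that between $v_{k-1}$ and $v_k$ on the path the subtree rooted at the current node only gains leaves and never ``resets.'' This is immediate because moving from a node to its parent in a rooted tree never decreases the subtree leaf count, and the intermediate nodes on the path between consecutive duplicate nodes are increasing nodes, which also only add leaves; so the chain of inequalities $N_{k-1} \le (\text{leaves under the path-child of }v_k) $ and then $\times d_k \le N_k$ goes through. A secondary point worth a sentence is handling degenerate cases — duplicate nodes of degree $1$ contribute $\log 1 = 0$ and are harmless, and if there are no duplicate nodes on the path the sum is empty and the bound is trivial.
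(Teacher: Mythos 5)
Your proof is correct and is essentially the paper's own argument: both rest on the fact that the children of a duplicate node have identical hashes, hence (by Fact~\ref{f:different}) identical induced strings and equal leaf counts, so the leaf count multiplies by the degree at each duplicate node on the path, and telescoping gives $\prod_k d_k \le n$, i.e.\ $\sum_k \log d_k \le \log n$. The only cosmetic difference is that you walk bottom-up (leaf counts multiplying) while the paper walks top-down (subtree sizes dividing by $c(d_i)$); the content is the same, and your note that the bound is deterministic matches the lemma as stated.
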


\begin{proof}
Let $d_1, \ldots d_k$  be the duplicate nodes on a root-to-leaf path. Let $c(d_i)$ be the number of children of $d_i$ and let $s(d_i)$ be the number of leaves in the induced subtree of $d_i$. We know that for every child $m$ of $d_i$, $s(m)=\frac{s(d_i)}{c(d_i)}$, as in a duplicate node, all children have identical hashes and thus by the first point represent identical strings. Thus 
$s(d_{i+1})\leq \frac{s(d_i)}{c(d_i)}$, and
$$ s(d_k)\geq \prod_{i=1}^k c(d_i) = 2^{\log \prod_{i=1}^k c(d_i)}= 2^{\sum_{i=1}^k \log c(d_i)}.
$$
Since $n \geq s(d_{i+1})$, we have 
$n \geq 2^{\sum_{i=1}^k \log c(d_i)}$, or equivalently
$\log n \geq \sum_{i=1}^k \log c(d_i)$.
\end{proof}

From Lemmata~\ref{l:inc} and \ref{l:dup} we have bounded both types of levels and can combine:

\begin{corr}\label{c:traverse}
Traversing the DDT from the root to the $i$th leaf takes time $O(\log n)$ w.h.p.
\end{corr}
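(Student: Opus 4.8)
The plan is to combine the two preceding lemmata directly. Corollary~\ref{c:traverse} asserts that a root-to-leaf traversal takes $O(\log n)$ time w.h.p., and the natural route is to decompose the cost of such a traversal into the work done at increasing levels and the work done at duplicate levels, since the levels of the DDT strictly alternate between these two types.

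First I would fix the target leaf, say the $i$th leaf, and consider the root-to-leaf path $v_0 = \text{root}, v_1, \ldots, v_h = \text{leaf}$. The cost of moving from a node $v_j$ to its appropriate child is governed by the secondary structure storing that node's children: for an increasing node the children sit in a linked list, so locating the correct child among the right siblings costs time proportional to how far along the list we must walk; for a duplicate node the children sit in a $2$--$3$--$4$ tree, so locating the correct child costs $O(\log(\text{degree}))$. Hence the total traversal time is, up to constants, the number of increasing nodes encountered (each contributing $O(1)$ per step taken within its child list) plus the sum over duplicate nodes $d$ on the path of $\log(\text{degree}(d))$, plus an $O(\log n)$ term from the parent's use of the level-link/augmented size information to decide which child subtree contains leaf $i$.

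Then I would invoke the two lemmata: Lemma~\ref{l:inc} bounds the number of increasing nodes on the path by $O(\log n)$ w.h.p., and since at an increasing node we are walking a list whose length is at most the node's degree, I need one more observation — that the total work walking these lists telescopes or is itself $O(\log n)$ w.h.p. Here I would reuse exactly the argument behind Lemma~\ref{l:inc}: walking backwards from the leaf to the root, every "go left" step corresponds to a list step at an increasing level, and Lemma~\ref{l:rps} already bounds the total number of such left-steps (not just the number of distinct increasing nodes) by $O(\log n)$ w.h.p. — in fact this is what the proof of Lemma~\ref{l:inc} really establishes. Combined with Lemma~\ref{l:dup}, which bounds $\sum_i \log(\text{degree}(d_i))$ over duplicate nodes on the path by $O(\log n)$ deterministically, and with the $O(\log n)$ height bound of Lemma~\ref{l:height}, every summand is $O(\log n)$ w.h.p.

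The main obstacle I anticipate is bookkeeping rather than mathematical depth: making precise that the per-node cost at an increasing node is charged against the left-steps counted in Lemma~\ref{l:inc}/Lemma~\ref{l:rps} (so that we do not double-count or undercount list traversal work), and confirming that the parent-to-child descent decision — which child's augmented subtree-size range contains index $i$ — can be made within the same $O(\log(\text{degree}))$ budget using the $2$--$3$--$4$ tree's ordered structure for duplicate nodes and within $O(1)$-per-step for the linked list of an increasing node. Once those accounting points are settled, the result is an immediate sum of the three $O(\log n)$ w.h.p.\ bounds.
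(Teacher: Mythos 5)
Your proposal is correct and follows essentially the same route as the paper, which simply combines Lemma~\ref{l:inc} (increasing levels) and Lemma~\ref{l:dup} (duplicate levels) together with the height bound. Your extra bookkeeping observation---that the backwards-walk argument behind Lemma~\ref{l:inc} really bounds the total number of left-steps, i.e.\ the linked-list traversal work at increasing nodes, not merely the count of increasing nodes---is exactly the accounting the paper leaves implicit, so there is no gap.
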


For split and merge, as in skip lists, we need the following lemma which will bound the cost of the partial rebuilding required:

\begin{lemma} \label{l:fringe}
Given any set of at most $c \log n$ nodes $S$ with distinct hash values, none of which are ancestors or descendants of each other,
the total number of nodes that are ancestors of elements of $S$ and share a left ID with an element of $S$ is $O(\log n)$ in expectation.
\end{lemma}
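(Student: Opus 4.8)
Let me think about this. We have a set $S$ of at most $c\log n$ nodes, pairwise incomparable (no one is an ancestor of another). For each $s \in S$, consider the path from $s$ up to the root. Along this path, the nodes that share a left ID with $s$ form, by Fact \ref{f:gp}, a connected prefix of the path — they are the ancestors of $s$ having $s$ (well, having $s$'s leftmost leaf) as leftmost descendant. So "the ancestors of elements of $S$ that share a left ID with an element of $S$" is really, for each $s$, a contiguous run $P_s$ going up from $s$, and I want to bound $\sum_{s\in S}|P_s|$ — or actually the size of the union, but since the $P_s$ are disjoint (distinct $s$ have distinct left IDs... wait, not necessarily, two incomparable nodes could have the same left ID only if one is an ancestor of the other, contradiction — so yes, distinct left IDs, hence disjoint runs), it's the same as the sum.

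So the plan is: for a single node $s$ at some level, the length of $P_s$ is governed exactly by the skip-list-style analysis already used in Lemma \ref{l:height} and Fact \ref{f:gp}: going up two levels (duplicate then increasing, or the relevant pairing), the left ID survives with probability at most $\tfrac12$, independently of the coin flips lower down (the randomness is in the hashes of the neighbors, which are fresh). Hence $\Pr[|P_s| \ge k] \le 2^{-\lfloor k/2\rfloor}$. The subtlety matching Lemma \ref{l:height} is that these events across different leaves are not independent, but that's fine — we only need a union bound. First I would state precisely that $|P_s| \le 2(c'+1)\log n$ for every individual $s$ with probability $1 - n^{-c'}$, which follows immediately from the same computation as Lemma \ref{l:height} (indeed it is a sub-event of what is bounded there). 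Then, union-bounding over the at most $c\log n$ nodes of $S$ — or more cheaply, just noting every relevant node lies on *some* root-to-leaf path and the total height is $O(\log n)$ w.h.p. by Lemma \ref{l:height} — gives that $\sum_{s\in S}|P_s| \le |S| \cdot O(\log n)$, which is $O(\log^2 n)$, not $O(\log n)$. That is too weak, so a crude per-node bound does not suffice; we genuinely need to exploit that the total is a sum of $c\log n$ geometric-type random variables.

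The real argument: let $X_s = |P_s|$. We want $\Pr[\sum_s X_s > A\log n]$ small for a suitable constant $A$. Each $X_s$ is stochastically dominated (in the strong, conditional sense of Lemma \ref{l:rps}) by twice the number of fair-coin flips to get the first head — or rather, the number of "two-level steps" before the left ID dies is dominated by a geometric variable with parameter $\ge \tfrac12$, and crucially this domination holds conditionally on everything about the other nodes, because at each level the survival of $s$'s left ID depends on the hash of the single left-neighbor of the current top-of-$P_s$ node, a fresh independent value. So $\sum_s X_s$ is dominated by (twice) the number of fair coin flips needed to collect $|S| = c\log n$ heads, where the flips may be adaptively biased toward heads — exactly the setting of the second part of Lemma \ref{l:rps}. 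Applying that lemma (scaled: we want $\log n$ replaced by $c\log n$, which just rescales the exponent) yields that $\sum_s X_s$ exceeds $c' c \log n$ with probability at most $n^{-\Omega(c' c)}$, i.e. $O(\log n)$ with high probability.

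The main obstacle is making the conditional stochastic-domination rigorous: one has to argue carefully that, as we reveal the tree bottom-up and track, for each $s\in S$ simultaneously, whether its left ID has died yet, the next relevant coin flip for any still-alive $s$ is a fresh hash comparison with bias $\ge \tfrac12$ toward "death," independent of all flips revealed so far for that $s$ and for the other nodes — precisely the adversarial-adaptive-bias model of Lemma \ref{l:rps}. Once that coupling is set up, Lemma \ref{l:rps} does all the work and the conclusion is immediate. I would also remark that the hypothesis $|S| \le c\log n$ is what keeps the final bound at $O(\log n)$ rather than $O(|S|)$; and that the "none an ancestor of another" hypothesis is used to ensure the left IDs are distinct so the runs $P_s$ are disjoint and summing lengths equals counting nodes.
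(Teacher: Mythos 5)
Your proposal is correct and follows essentially the same route as the paper: walk upward from each element of $S$ while its left ID survives, treat each survival as a coin with death probability at least $\tfrac12$ via Fact~\ref{f:gp}, and bound the total number of steps before $|S| \le c\log n$ deaths by the adaptive biased-coin statement of Lemma~\ref{l:rps}. Your added observations (disjointness of the runs via incomparability, and the need for the conditional domination rather than a per-node union bound, which would only give $O(\log^2 n)$) are exactly the details the paper leaves implicit.
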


\begin{proof}
We proceed through each item in $S$, walking up in the structure while the left ID remains the same and moving on to the next item in $S$ if it does not. As Fact~\ref{f:gp} bounds the chance of going up as at most $\frac{1}{2}$ we can thus apply Lemma~\ref{l:rps} to complete the proof.
\end{proof}

\subsection{Operations}

We describe in detail the DDT's implementation and analysis of the dynamic strings ADT operations. Given a hash function, the shape of DDT is completely determined by the data, and thus the implementation of the modifying operations is a simple matter of removing parts that become invalid and rebuilding bottom-up.

As mentioned in the description, any collision of the hashes, or having the height exceed $\alpha \log n$ will trigger an immediate rebuilding of the structures of all strings in the collection with a new hash function. 

\begin{description}
\item[Get.] To obtain a character we navigate to its leaf guided by the augmented subtree sizes, this takes time $O(\log n)$ w.h.p.~by the Corollary \ref{c:traverse}.
\item[Equal.] Simply compare the hashes of the roots in constant time, this works by Fact~\ref{f:different}.

\item[Longest Common Prefix (LCP).]
See Figure~\ref{fig2}.
The method is simple, start two pointers, $p$ and $q$, at the root of the shallower of the two trees, and the leftmost node at the same level in the other tree, and move them down so that they always point to the first node on that level where the sequences of hashes at that level differ; when they reach the leaves they will be at the first differing character in the strings. 
At any step $p$ and $q$ are simply moved to the first different child. The only special case is if w.l.o.g.~$p$ has $k$ children, $q$ has $>k$ children, and the first $k$ children of both have the same sequence of hashes. Then $q$ is moved to its $k+1$th child and $p$ is moved to its right neighbor's leftmost child. 

For the runtime, only $O(\log n)$ nodes are traversed by the pointers so we only need to bound the cost of moving the pointers to their children. It is $O(\log n)$ by Lemma~\ref{l:inc} which bounds the time to move to children of increasing nodes, combined with the observation that all identical nodes where one moves the a pointer to a non-minimal/maximal element (for which we have direct pointers) are on a root-to-leaf path and thus  Lemma~\ref{l:dup} applies.

\item[Split.] See Figure~\ref{fig3}.
The first step is to split all nodes that are ancestors of the split point as follows: take all such nodes and remove them and replace them with up to two new nodes, one which contains all former children whose descendent are entirely to the left of the split point, and one symmetrically for the right. One or both of these sets of children may be empty in which case no node is created for that set.
The nodes that are created are parentless.

The left and right halves of the structure are now disconnected, and we now discuss how to rebuild the left half, the right half is symmetric. 
We proceed bottom up and rebuild in the obvious way. For example, to build a increasing level $\ell$, we process the nodes of level $\ell-1$ that have no parent by adding them to the rightmost node of level $\ell$ if their hash is larger than their left neighbor's hash, or adding a new node to level $\ell$ as a parent if their hash is smaller than their left neighbor's hash. If the former rightmost node of level $\ell$ has children added to it, its hash must be recomputed and all of its ancestors have their parent pointer detached as with the new hash it is now unknown if they are allowed to be attached to their parent. All nodes that have no children as a result of this are removed.

For the runtime, the splitting process splits $O(\log n)$ nodes and has the same runtime as traversal, $O(\log n)$ w.h.p. from Corollary~\ref{c:traverse}. For the rebuilding phase, the amortized time is linear in the number of nodes touched in the rebuilding process. (Recall that inserting a new minimum or maximum element into the BST of a duplicate node takes time $O(1)$ amortized). 
Call the fringe nodes the leftmost nodes on each level that are either new or have had a child added. All new and changed nodes are part of the fringe or are to the right. 
But, these newly added nodes share a left ID with one of the $O(\log n)$ nodes on the fringe, an thus by applying Lemma~\ref{l:fringe} to the top node with each left ID on the fringe, there are expected to be $O(\log n)$ such nodes. Building these nodes takes time linear in total number of number of children, which sums to $O(\log n)$ in expectation for newly added nodes with more than one child as all newly added nodes have a fringe node in every subtree.

The last aspect of the runtime to consider are the nodes that may have been removed as part of the rebuilding process as they only had one child that they were detached from. We simply use amortization to make such deletions free by paying for the deletion in advance when the node is created.

\item[Concatenate.] First we proceed from the bottom up to see if the leftmost node of the right part and the rightmost node of the left part can be merged into one node. If such a merge can be done, all of the ancestors of the nodes to be merged are detached from their parents and the merge is performed. Then from that level up, a rebuilding phase commences, as in split with one minor difference: in split for the left structure at each level we began with the leftmost existing node, added children if possible, and then added new nodes to the right. Here we do the same but at the end check to see if the rightmost node added (or the rightmost node of the former left if none were added) can be merged with the leftmost node of the former right, if so this is done and once again the all ancestors are detached from their parent.

The analysis is as for split, except we need to define a right and left fringe, and all added nodes have the same left ID as a node on the left fringe or the same right ID as a node on the right fringe.
    
\item[Set.] Changing a character in the string can implemented by two splits, a new, and a concatenate.

\end{description}

We summarize the runtimes for the operations above:

\begin{theorem}
Data dependent trees support the dynamic strings ADT as presented in Theorem~\ref{t:dynstrings}. Equality testing takes time $O(1)$ and the remaining operations take time $O(\log n)$, with the runtime of the update operations being expected and amortized, where $n$ is the total size of all strings stored.
\end{theorem}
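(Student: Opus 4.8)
The plan is to assemble the per-operation analyses already carried out in the description of the operations, verifying that the correctness guarantees and the amortized rebuilding costs compose without loss. First I would dispatch correctness and the trivial operations: by Fact~\ref{f:different}, two nodes have equal hashes if and only if their induced strings coincide, provided no hash collision has escaped detection. The fingerprint table detects every collision at the moment a node is created, triggering a rebuild of the whole collection with a fresh hash function; since a collision occurs with probability at most $n^{-c}$ per created node and only a constant number of nodes are created per operation, the expected amortized cost of this check-and-rebuild is $O(1)$ for $c\geq 3$. Hence \textsc{Equal} is correct and runs in $O(1)$ by comparing root hashes, and \textsc{New} builds a one-leaf tree in $O(1)$.

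Next I would invoke the structural lemmas. Lemma~\ref{l:height} gives height $O(\log n)$ with probability $1-n^{-c}$; choosing the rebuild threshold $\alpha\log n$ with $\alpha$ large enough makes exceeding it an event of probability at most $n^{-2}$, so the amortized expected cost of that rebuild is again $O(1)$ per operation. With the height bounded, Corollary~\ref{c:traverse} makes any root-to-leaf traversal cost $O(\log n)$ w.h.p., which immediately yields \textsc{Get}. For \textsc{Lcp}, maintaining the two pointers as described visits $O(\log n)$ nodes by Corollary~\ref{c:traverse}, and locating the first differing child costs $O(\log n)$ in total because the increasing-node steps are bounded by Lemma~\ref{l:inc} while the duplicate-node steps for which we lack a direct min/max pointer all lie on a single root-to-leaf path, so Lemma~\ref{l:dup} applies.

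Then I would handle the update operations. For \textsc{Split}, the nodes that must be cut are the $O(\log n)$ ancestors of the split point, reached in $O(\log n)$ w.h.p. by Corollary~\ref{c:traverse}; the bottom-up rebuilding only touches the fringe nodes and nodes to their right, every newly created or modified node shares a left ID with one of the $O(\log n)$ fringe nodes, so Lemma~\ref{l:fringe} bounds their number by $O(\log n)$ w.h.p., and the rebuilding work is linear in their total number of children, which is $O(\log n)$ since every new node of degree more than one contains a fringe node in each subtree, with insertions of new extreme elements into the $2$-$3$-$4$ trees of duplicate nodes costing $O(1)$ amortized and singleton-node deletions prepaid at creation. \textsc{Concatenate} is analyzed identically once both a left and a right fringe are defined, and \textsc{Set} is two \textsc{Split}s, a \textsc{New}, and a \textsc{Concatenate}, so all updates run in $O(\log n)$ amortized w.h.p.

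\textbf{Main obstacle.} The delicate part is the accounting in \textsc{Split} and \textsc{Concatenate}: showing that the set of fringe-adjacent rebuilt nodes and their total degree are both $O(\log n)$, and that the amortization absorbing the collision/height rebuilds and the singleton-node deletions remains consistent over an arbitrary sequence of operations. Everything else reduces to a direct citation of Fact~\ref{f:different} and Lemmata~\ref{l:height}, \ref{l:inc}, \ref{l:dup}, \ref{l:fringe} together with Corollary~\ref{c:traverse}.
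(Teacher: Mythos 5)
Your proposal is correct and follows essentially the same route as the paper, which proves this theorem by the same assembly of the per-operation analyses using Fact~\ref{f:different}, Lemma~\ref{l:height}, Corollary~\ref{c:traverse}, Lemmata~\ref{l:inc}, \ref{l:dup}, \ref{l:fringe}, and the amortized accounting for collision and height-triggered rebuilds. The only minor slip is your claim that a constant number of nodes is created per operation; the paper instead bounds the collision probability per new hash by $n^{-c}$ and hence per operation by $n^{-(c-1)}$, which still yields constant expected amortized rebuild cost, so your conclusion stands.
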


In our modular subset sum algorithm, we needed the dynamic strings ADT to be partially persistent. 
We note that as our data structure, excluding the fingerprint collision table, is a pointer-based data structure of constant degree and thus partial persistence \cite{DBLP:journals/jcss/DriscollSST89} can be applied to be able to execute queries on old versions of the structure. 
The more complex confluent persistence \cite{DBLP:journals/jal/FiatK03,DBLP:conf/soda/ColletteIL12} is not needed, as concatenate operations are not performed across strings of different versions.
The fingerprint collision table is not needed to answer queries and thus persistence need not be applied to it. Care must be taken however as the LCP and equality queries only work when the different structures were constructed with the same hash value. Thus if a tree was rebuilt  with a new hash function because of a hash collision or excessive height, queries that involve versions before and after the rebuild must be forbidden. The easiest way to ensure this is if the DDT needs to be rebuilt, the entire algorithm that uses it restarts from scratch. So long as the algorithm that uses it has runtime bounded by a known polynomial in the maximum size of the DDT, as is the case with modular subset sum, the rebuild and restarting costs do not asymptotically affect the expected amortized runtime.

\begin{landscape}
\begin{figure}
    \centering
    \includegraphics[width=9in]{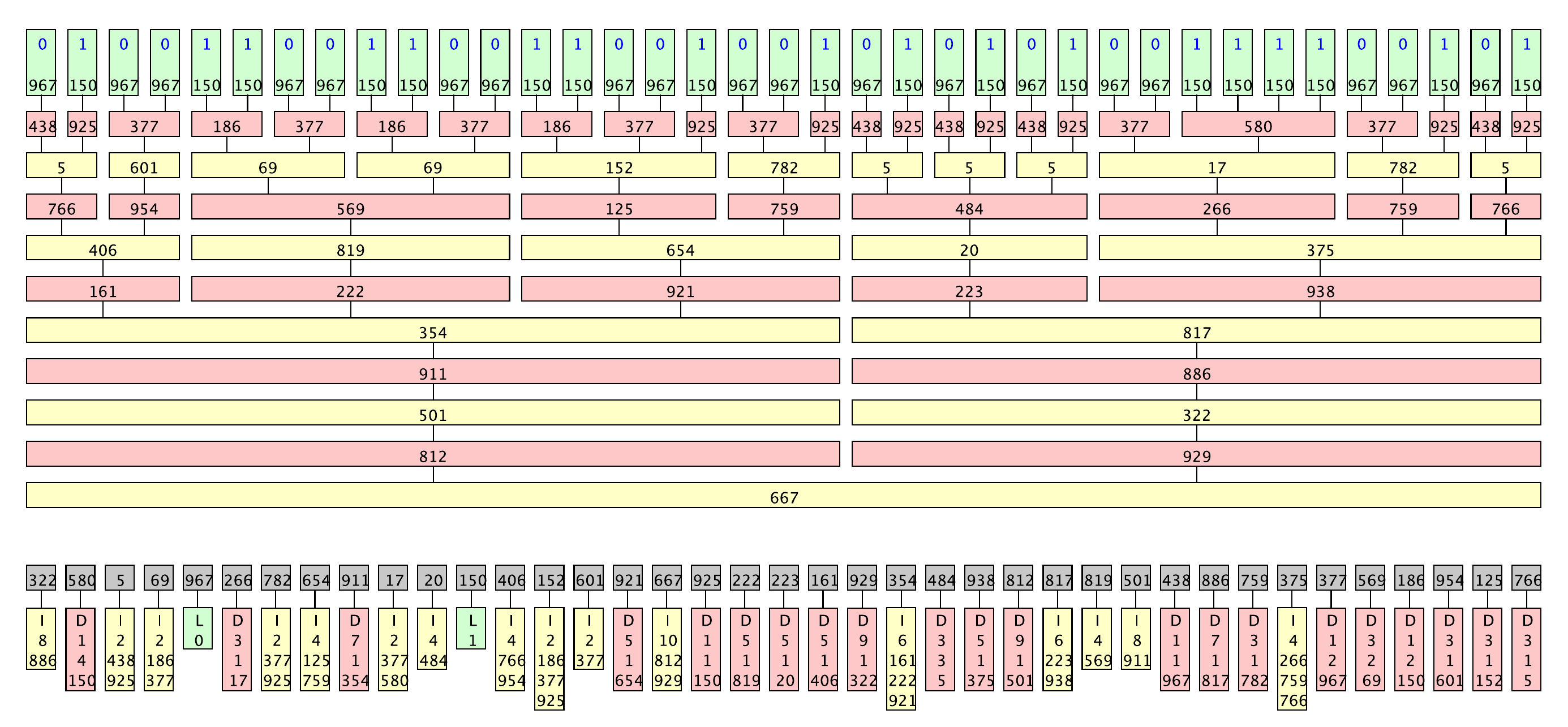}
    \caption{Illustration of the data dependent tree (DDT), storing the binary string at the top. A random hash function to the integers $[1000]$ was chosen for simplicity. Yellow nodes are increasing nodes, whose children have increasing hashes left-to-right, and red nodes are duplicate nodes, whose children have identical hashes. The extent of each node visualises the range from the left ID to the right ID of the node. At the bottom is the fingerprint hash table, where, for example, the second entry, 580, indicates that it is a duplicate node from the first level and has four children with a hash of 150 each. We note that this figure was made programmatically by implementing the construction of the DDT; this is just a screen of code in Processing's python mode.}
    \label{fig}
\end{figure}
\end{landscape}

\begin{landscape}
\begin{figure}
    \centering
    \includegraphics[width=8.5in]{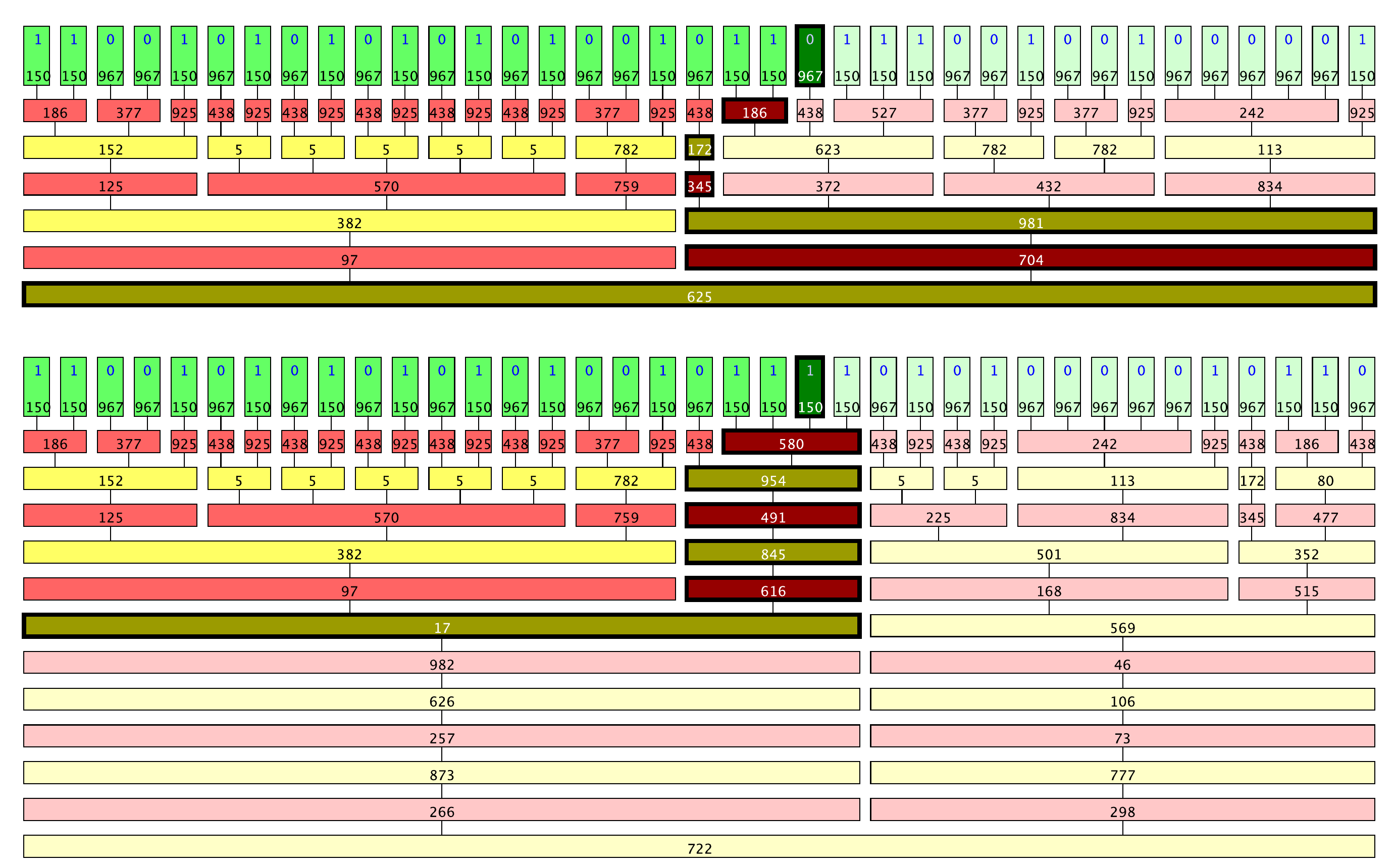}
    \caption{Longest common prefix. Dark nodes represent the search path, which starts from the leftmost node of the highest level the two structures have in common. In general the search moves to the leftmost differing child. The only special case is when the children of one are a prefix of the children of the other. This occurs in the figure where 186 (top) has two 150's as children and 580 (bottom) has three 150's as children. In this case the shorter of the two (top) goes one step right and then to the left child, and the longer of the two goes the child one beyond what they had in common. Observe that both structures to the left of the search path is identical. }
    \label{fig2}
\end{figure}
\end{landscape}

\begin{figure*}
    \centering
    \includegraphics[width=6in]{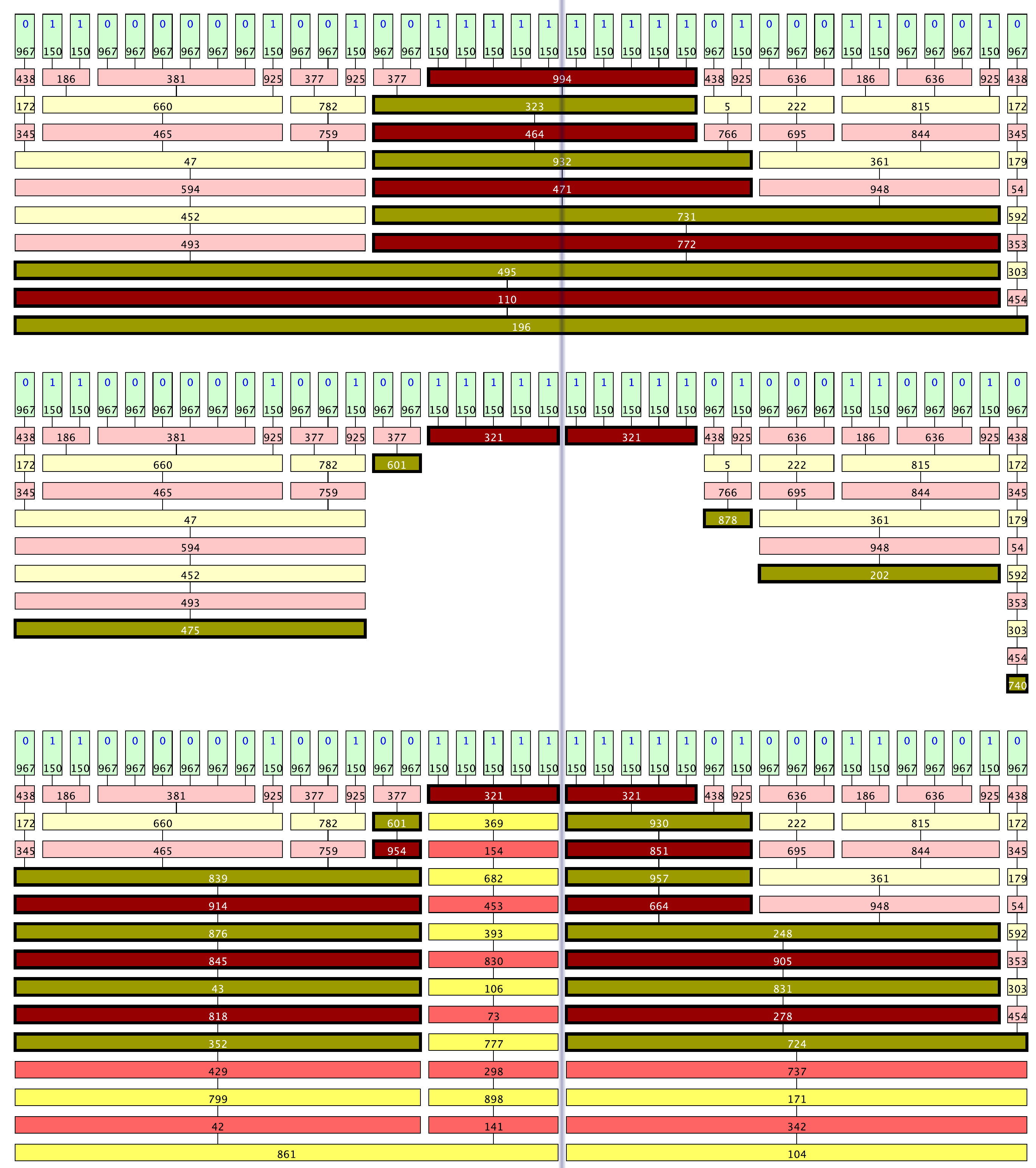}
    \caption{Split. In the first phase (top), those nodes that straddle the split point are identified. Then, they are split into zero, one or, two nodes, one which has children entirely to the left of the split point, and one which nodes children entirely to the right. This is illustrated in the middle figure, for example, 994 is split into two 321's, 323 is split into 601 on the left, and 464 is removed completely as its only child straddles the split line. Then the nodes without parents are incorporated into the structure in a bottom-up reconstruction. The dark nodes represent the fringe nodes, two per level, that delineate the area that was rebuilt, and only a logarithmic number of nodes are expected to be added in this process.}
    \label{fig3}
\end{figure*}

\clearpage


\bibliographystyle{alpha}
\bibliography{bib}

\end{document}